\newcommand{\llrr}[1]{\llbracket #1 \rrbracket}
\newcommand{\Ngtzero}{\mathbb{N}^{> 0}}
\newcommand\defeq{\mathrel{\stackrel{\makebox[0pt]{\mbox{\normalfont\tiny def}}}{=}}}
\newcommand{\Bcup}[1]{\textstyle \bigcup_{#1}}
\newcommand{\Bsum}[1]{\textstyle \sum_{#1}}
\newcommand{\lanrang}[1]{\langle #1 \rangle}
\newcommand{\preset}[1]{{}^{\bullet}#1}
\newcommand{\presetstar}[1]{{}^{\circ}#1}
\newcommand{\postset}[1]{#1^{\bullet}}
\newcommand{\dequiv}{\stackrel{\delta}{\equiv}}
\newcommand{\dequivparam}[1]{\stackrel{\delta_{#1}}{\equiv}}
\newcommand{\dequivprime}{\stackrel{\delta'}{\equiv}}
\newcommand{\stabilize}{\textit{Stabilize}}
\newcommand{\dtequiv}{\stackrel{\delta,t}{\equiv}}
\newcommand{\Mdelta}{M^\delta}
\newcommand{\Mdeltaprime}{M^{\delta\prime}}
\newcommand{\CPN}[1]{(P^{#1}, T^{#1},\mathbb{C}^{#1},\mathbb{B}^{#1}, C^{#1}, G^{#1}, W^{#1}, W_I^{#1}, M_0^{#1})}
\newcommand{\deltasCPN}{(P, T,\mathbb{C}^\delta,\mathbb{B}^\delta, C^\delta, G^\delta, W^\delta, W_I^\delta, \Mdelta_0)}
\newcommand{\bempty}{b_\upepsilon}
\newcommand{\specialcell}[2][c]{%
  \begin{tabular}[#1]{@{}r@{}}#2\end{tabular}}
  \newcommand{\specialcelltwo}[2][c]{%
  \begin{tabular}[#1]{@{}l@{}}#2\end{tabular}}
\def\moverlay{\mathpalette\mov@rlay}
\def\mov@rlay#1#2{\leavevmode\vtop{%
   \baselineskip\z@skip \lineskiplimit-\maxdimen
   \ialign{\hfil$\m@th#1##$\hfil\cr#2\crcr}}}
\newcommand{\charfusion}[3][\mathord]{
    #1{\ifx#1\mathop\vphantom{#2}\fi
        \mathpalette\mov@rlay{#2\cr#3}
      }
    \ifx#1\mathop\expandafter\displaylimits\fi}
\newcommand{\colorsimp}{\textit{set}}
\title{Methods for Efficient Unfolding of Colored Petri Nets}
\author{Alexander Bilgram,\ Peter G. Jensen,\ Thomas Pedersen, \\
            \textrm{\normalsize \bf Ji\v{r}\'{\i} Srba\thanks{Address for correspondence: Aalborg University, Department of Computer Science,
                              Aalborg, Denmark. \newline \newline
                    \vspace*{-6mm}{\scriptsize{Received April 2022; \ accepted January 2023.}}}, \  Peter H. Taankvist}
 \\
Aalborg University\\
Department of Computer Science,  Aalborg, Denmark\\
pgj@cs.aau.dk, srba@cs.aau.dk 
}
\begin{document}

\maketitle

\vspace*{-6mm}
\begin{abstract}
Colored Petri nets offer a compact and user friendly representation
of the traditional Place/Transition (P/T) nets and colored nets with finite color ranges
can be unfolded into the underlying P/T nets, however, at the expense
of an exponential explosion in size.
We present two novel techniques based on static analysis
in order to reduce the size of unfolded colored nets. The first method
identifies colors that behave equivalently and groups them into equivalence
classes, potentially reducing the number of used colors. The second method
overapproximates the sets of colors that can appear in places and excludes
colors that can never be present in a given place.
Both methods are complementary and the combined approach allows us to
significantly reduce the size of multiple colored Petri nets from the
Model Checking Contest benchmark. We compare the performance of our unfolder
with state-of-the-art techniques implemented in the tools
MCC, Spike and ITS-Tools, and while our approach is
competitive w.r.t.
unfolding time, it also outperforms the existing approaches both
in the size of unfolded nets as well as in the number of answered model
checking queries from the 2021 Model Checking Contest.
\end{abstract}

\section{Introduction} \label{sec:Introduction}

Petri nets~\cite{CarlPetriNets}, also known as P/T nets,
are a powerful modelling formalism supported by a rich family of
verification techniques~\cite{MurataPaper}.
However, P/T nets often become too large and
incomprehensible for humans to read. Therefore, colored Petri nets
(CPN)~\cite{CPNJensen} were introduced to allow for high level
modelling of distributed systems.
In CPNs, each place is assigned a color domain and each token in that place
has a color from its domain. Arcs have expressions that define
what colored tokens to consume or produce, and transitions have
guard expressions that restrict transition enabledness.

A CPN can be translated into an equivalent P/T net, provided
that every color domain is finite, through a process
called \textit{unfolding}.  This allows us to use efficient verification
tools already developed for P/T nets. When unfolding a CPN, each place
is unfolded into a new place for each color that a token can take
in that place; a naive approach is to create a new place for each color
in the color domain of the place. Transitions are unfolded
such that each binding of variables to colors, satisfying the guard,
is unfolded into a new transition copy in the unfolded net.
The size of an unfolded net can be exponentially larger than the
colored net and the unfolding process
therefore requires optimizations
in order to finish in realistic time and memory. Several types of improvements
were proposed that analyse transition guards and
arc expressions~\cite{MCCUnfolder, PatternPaper, IDDUnfolder}.
However, even with these optimizations, there still exist CPNs that
cannot be unfolded using the existing tools. As an example, the largest instances
of the nets \textit{FamilyReunion}~\cite{FamilyReunionNet, FamilyReunionPaper}
and \textit{DrinkVendingMachine}~\cite{DrinkVendingMachineNet,
DrinkVendingMachinePaper} from the Model Checking Contest~\cite{mcc:2021}
have not yet been unfolded in the competition setup.

\medskip
We propose two novel methods for statically analysing a CPN
to reduce the size of the unfolded P/T net. The first method called
\textit{color quotienting} uses the fact that sometimes tokens with different colors
can be indistinguishable in the sense that they generate bisimilar behaviour. If such colors exist in the
net, we can create equivalence classes that represent the colors with
similar behaviour. As such, we can reduce the amount of colors
that we need to consider when unfolding.
The second method called \textit{color approximation} overapproximates
which colors can possibly be present in any given place such that
we only unfold places for the colors that can exist. This method also
allows for invalidating bindings that are dependent on unreachable
colors, thus reducing the amount of transitions that are unfolded.

Our two methods are implemented in the model checker TAPAAL~\cite{TAPAALTool, verifypnPaper} and an extensive experimental evaluation
shows convincing performance compared to the state-of-the-art tools for CPN unfolding.

\paragraph{Related work.}
Heiner et al.~\cite{PatternPaper}  analyse the arc and guard expressions
to reduce the amount of bindings
by collecting \textit{patterns}.
The pattern analysis is implemented in the tool
Snoopy~\cite{SnoopyTool} and our color approximation method
further extends this method.
In~\cite{IDDUnfolder} the same authors present a technique
for representing the patterns as Interval Decision Diagrams. This
technique is used in the tools Snoopy~\cite{SnoopyTool},
MARCIE~\cite{MARCIETool} and Spike~\cite{SpikeTool}
and performs better compared to~\cite{PatternPaper}; it also allows to unfold a superset
of colored nets compared to the format adopted by the Model Checking Contest
benchmark~\cite{mcc:2021}.

In~\cite{MCCUnfolder} Dal-Zilio describes a method (part of the unfolder MCC) called
\textit{stable places}. A stable place is a place that never changes
from the initial marking, i.e. every time a token is consumed
from this place an equivalent token of the same color is added to the place.
This method is especially efficient on the net BART from the
Model Checking Contest~\cite{mcc:2021}, however,
it does not detect places that deviate even a little from the initial marking.
Our color approximation method includes a more general form of the stable places.
In the unfolder MCC~\cite{MCCUnfolder}, a \textit{component analysis}
is introduced and it detects if a net consists of a number of
copies of the same component. MCC is used in the TINA
toolchain~\cite{TINAtool} and to our knowledge in the latest release of the
LoLA tool~\cite{LoLA2}.
GreatSPN~\cite{GSPNPaper} is another tool for unfolding CPNs,
however, in~\cite{MCCUnfolder} it is demonstrated that MCC is
able to greatly outperform GreatSPN and as such we omit GreatSPN from later experiments.

\eject
ITS-Tools~\cite{ITSToolPaper} has an integrated unfolding engine.
The tool uses a technique of \textit{variable symmetry identification},
in which it is analyzed whether variables $x$ and $y$ are permutable
in a binding. 
Furthermore, they use stable places during the binding and they apply
analysis to choose the binding order of parameters to simplify
false guards as soon as possible.
After unfolding, ITS-Tools applies further post-unfolding reductions
that remove orphan places/transitions and behaviourally equivalent
transitions~\cite{YannCorrespondence}. Our implementation includes a variant
of the symmetric variables reduction as well.
In \cite{YannSymmetryDetection} Thierry-Mieg et al. present a
technique for automatic detection of symmetries in high level Petri nets
used to construct symbolic reachability graphs in the GreatSPN tool.
This detection of symmetries is reminiscent of the color quotienting
method presented in this paper, although our color quotienting method
is used for unfolding the colored Petri net instead of symbolic model\linebreak checking.

\medskip
In~\cite{KlostergaardThesis} Klostergaard presents a simple unfolding method
implemented in TAPAAL~\cite{TAPAALTool, verifypnPaper},
which is the base of our implementation. The implementation is efficient
but there are several nets which it cannot unfold.
Both unfolding methods 
introduced in this paper
are advanced static analyses techniques and we observe that the above
mentioned techniques, except symmetric variables and component analysis,
are captured by color approximation and/or color quotienting.

\medskip
This paper is an extended version of the conference paper~\cite{BJPST:RP:2021} with full proofs,
complete definitions and additional examples and last but not least a substantial reimplementation
of the methods in the tool TAPAAL with improved experimential results compared to~\cite{BJPST:RP:2021}.

\section{Preliminaries} \label{sec:Preliminaries}
Let $\mathbb{N}^{> 0}$ be the set of positive integers
and $\mathbb{N}^{0}$ the set of nonnegative integers.
A Labeled Transition System (LTS) is a triple $(Q,Act, \xrightarrow{})$
where $Q$ is a set of states, $Act$ is a finite, nonempty set of actions,
and  $\xrightarrow{} \subseteq Q \times Act \times Q$
is the transition relation.
We write $s \xrightarrow{a}$ if there is $s' \in Q$ such that $s \xrightarrow{a} s'$
and $s \not\xrightarrow{a}$ if there is no such $s'$.
A binary relation $R$ over the set of states of an LTS is a \emph{bisimulation}
iff for every $(s_1,s_2) \in R$ and $a \in Act$ it holds that
if $s_1 \xrightarrow{a} s_1'$ then there is a
transition $s_2 \xrightarrow{a} s_2'$ such that $(s_1',s_2') \in R$, and
if $s_2 \xrightarrow{a} s_2'$ then there is a transition $s_1 \xrightarrow{a} s_1'$ such that $(s_1',s_2') \in R$.
Two states $s$ and $s'$ are \emph{bisimilar}, written $s \sim s'$,
iff there is a bisimulation $R$ such that $(s,s') \in R$.

A finite \emph{multiset} over some nonempty set $A$
is a collection of elements from $A$ where each element occurs
in the multiset a finite amount of times;
a multiset $S$ over a set $A$ can be identified with
a function $S: A \xrightarrow{} \mathbb{N}^{0}$
where $S(a)$ is the number of occurrences of element $a \in A$
in the multiset $S$. We shall represent multisets by a formal sum
$\sum_{a \in A} S(a)'(a)$
such that e.g. $1'(x) + 2'(y)$ stands for a multiset containing
one element $x$ and two elements $y$.
We assume the standard multiset operations of membership ($\in$),
inclusion ($\subseteq$), equality ($=$), union ($\uplus$),
subtraction ($\setminus$) and by $|S|$ we denote the cardinality of $S$ (including the repetition of elements).
By $\mathcal{S}(A)$ we denote the set of all multisets over
the set $A$.

\medskip
Finally, we also define the function $\colorsimp$
as a way of reducing multisets of colors to sets of colors given by
   $ \colorsimp(S) \defeq \{a \ | \ a \in S\}$
where $\colorsimp (S)$ is the set of all colors with at least one occurrence\linebreak  in $S$.

\subsection{Colored Petri nets} \label{sec:CPN}
Colored Petri nets (CPN) are an extension of traditional P/T nets
introduced by Kurt Jensen~\cite{CPNJensen} in 1981. In CPNs, places are associated with
color domains where colors represent the values of tokens.
Arc expressions describe what colors to consume and add to places
depending on a given binding (assignment of variables to colors).
Transitions may contain guards restricting which bindings are valid.
There exist several different definitions of CPNs from the
powerful version defined in~\cite{JensenCPNBook2} that includes the
ML language for describing arcs expressions and guards to less powerful ones
such as the one used in the Model Checking Contest~\cite{mcc:2021}.
We shall first give an abstract definition of a CPN.

\begin{definition}
A colored Petri net is a tuple $\mathcal{N} = \CPN{}$
where
\begin{enumerate}
    \item $P$ is a finite set of places,
    \item $T$ is a finite set of transitions such that $P \cap T = \emptyset$,
    \item $\mathbb{C}$ is a nonempty set of colors,
    \item $\mathbb{B}$ is a nonempty set of bindings,
    \item $C: P \xrightarrow{} 2^\mathbb{C} \setminus \emptyset$ is a place color type function,
    \item $G: T \times \mathbb{B} \xrightarrow{} \{\mathit{true}, \mathit{false}\}$ is a guard evaluation function,
    \item $W\!: ((P \times T) \cup (T \times P)) \times \mathbb{B} \xrightarrow{} \mathcal{S}(\mathbb{C})$ is an arc evaluation function such that $\colorsimp(W((p,t),b))\!\subseteq C(p)$ and $\colorsimp(W((t,p),b)) \subseteq C(p)$ for all $p \in P$, $t \in T$ and $b \in \mathbb{B}$,
    \item $W_I: P \times T \xrightarrow{} \Ngtzero \cup \{\infty\}$ is an inhibitor arc weight function, and
    \item $M_0$ is the initial marking where a marking $M$ is a function $M\!: \!P \xrightarrow{} \mathcal{S}(\mathbb{C})$
                   such that $\colorsimp (M(p))\! \subseteq C(p)$ for all $p \in P$.
\end{enumerate}
\end{definition}

In this definition, we assume an abstract set of bindings $\mathbb{B}$, representing
the concrete configurations a net can be in. For example, colored Petri nets often
use variables on the arcs and in the guards, and these variables can be assigned
concrete colors. Such a variable assignment is then referred to as a binding.
Notice that $G$ and $W$ are semantic functions which are in different
variants of CPN defined by a concrete syntax. These functions take as an argument
a binding and for this binding return either whether the guard is true or false,
or in case of $W$ the function returns the multiset of tokens that should be consumed/produced
when a transition is fired.

The set of all markings on a CPN $\mathcal{N}$ is denoted by
$\mathbb{M}(\mathcal{N})$. In order to avoid the use of partial functions,
we allow $W((p,t),b) = W((t,p),b) = \emptyset$ and $W_I(p,t) = \infty$,
meaning that if the arc evaluation function returns the empty multiset
then the arc has no effect on transition firing and if the inhibitor arc
function returns infinity then it never inhibits the connected transition.

\medskip
Let $\mathcal{N} = \CPN{}$ be a fixed CPN for the rest of this section.
Let $B(t) \defeq \{ b \in \mathbb{B} \ | \ G(t,b) = \textit{true} \}$ be the
set of all bindings that satisfy the guard of transition $t \in T$.
Let $\ell : T \xrightarrow{} Act$ be a transition labeling function.
The semantics of a CPN $\mathcal{N}$ is defined as an LTS
$L(\mathcal{N}) = (\mathbb{M}(\mathcal{N}),Act,\xrightarrow{})$ where
$\mathbb{M}(\mathcal{N})$ is the set of states defined as all markings on $\mathcal{N}$,
$Act$ is the set of actions, and
$M \xrightarrow{a} M'$ iff there exists $t \in T$ where $\ell(t) = a$ and there is
$b \in B(t)$ such that
    \begin{align*}
        &W((p,t),b) \subseteq M(p)  \text{ and }
        W_I(p,t) > |M(p)| \text{ for all } p \in P , \text{ and}\\
        &M'(p) = (M(p) \setminus W((p,t),b)) \uplus W((t,p),b) \text{ for all } p \in P.
    \end{align*}
We denote the firing of a transition $t \in T$ in marking $M$ reaching
$M'$ as $M \xrightarrow{t} M'$.  Let $\xrightarrow{} = \Bcup{t \in T}\xrightarrow{t}$ and let $\xrightarrow{}^*$ be the reflexive and transitive closure of $\xrightarrow{}$.

\begin{remark} \label{remark:Finite}
To reason about model checking of CPNs, we need to have
a finite representation of colored nets that can be passed as an input to an
algorithm. One way to enforce such a representation is to assume that
all color domains are finite and the semantic
functions $C$, $G$, $W$ and $W_I$ are effectively computable.
\end{remark}

Finally, let us define the notion of postset and preset of $p \in P$
as $\postset{p} = \{ t \in T \mid \exists b \in \mathbb{B}.\ W((p,t),b) \neq \emptyset \}$
and $\preset{p} = \{ t \in T \mid \exists b \in \mathbb{B}.\ W((t,p),b) \neq \emptyset \}$.
Similarly, for a transition $t \in T$ we define
$\postset{t} = \{ p \in P \mid \exists b \in \mathbb{B}.\ W((t,p),b) \neq \emptyset \}$
and $\preset{t} = \{ p \in P \mid \exists b \in \mathbb{B}.\ W((p,t),b) \neq \emptyset \}$. We also define the preset of inhibitor arcs as $\presetstar{t} = \{ p \in P \ | \ W_I(p,t) \neq \infty \}$.

\subsection{P/T nets}
A Place/Transition (P/T) net is a CPN $\mathcal{N} = \CPN{}$ with
one color $\mathbb{C} = \{ \bullet \}$ and only one binding
$\mathbb{B} = \{\bempty\}$ such that every guard evaluates to true
i.e. $G(t,\bempty) = \textit{true}$ for all $t\in T$ and every arc evaluates to a multiset over
$\{\bullet\}$ i.e.
$W((p,t),\bempty) \in \mathcal{S}(\{ \bullet \})$
and $W((t,p),\bempty) \in \mathcal{S}(\{ \bullet \})$ for all $p \in P$ and $t \in T$.

\subsection{Integer colored Petri nets} \label{Sec:IntegerCPN}

An integer CPN (as used for example in the Model Checking Contest~\cite{mcc:2021})
is a CPN $$\mathcal{N} = \CPN{}$$
where all colors are integer products i.e. $\mathbb{C} = \Bcup{k \geq 1}(\mathbb{N}^0)^{k}$. We use
interval ranges to describe sets of colors such that
a tuple of ranges $([a_1,b_1],...,[a_k,b_k])$ where $a_i,b_i \in \mathbb{N}^0$
for $i, 1 \leq i \leq k$, describes the set of colors
    $\{(c_1,...,c_k) \ | \ a_i \leq c_i \leq b_i \text{ for all } 1 \leq i \leq k \}$.
If the interval upper-bound is smaller than the lower-bound,
the interval range denotes the empty set and by $[a]$ we denote the singleton interval $[a,a]$.
As an example, consider the place color type of some place $p$ as $C(p) = \llrr{([1,2],[6,7])}$
describing the set of colors $\{(1,6),(1,7),(2,6),(2,7)\}$. For notational convenience,
we sometimes omit the semantic paranthesis and simply write $([1,2],[6,7])$ instead of $\llrr{([1,2],[6,7])}$.

We use the set of variables $\mathcal{V}=\{x_1,...,x_n\}$ to represent colors. Variables can be
present on arcs and in guards. A binding $b: \mathcal{V} \xrightarrow{} \mathbb{C}$
assigns colors to variables.
We write $b \equiv \langle x_1=c_1,...,x_n=c_n \rangle$
for a binding where $b(x_i)=c_i$ for all $i$, $1 \leq i \leq n$.
We now introduce the syntax of arc/guard expressions and its intuitive semantics
by an example. 

Figure~\ref{fig:CPNexample} shows an integer CPN where places (circles)
are associated with ranges. 
The initial marking contains
five tokens (two of color $0$ and three of color $2$) in $p_1$ and
two tokens of color $5$ in place $p_2$.
There is a guard on transition $t$ (rectangle)
that compares $x$ with the integer $1$ and restricts the valid bindings.
We can see that the arc from $t$ to $p_3$
creates a product of the integers $x$ and $y$, where the value of $x$ is decremented by one.
We assume that all ranges are cyclic, meaning that the predecessor of $0$ in the color set $A$ is $2$.
Figure~\ref{fig:CPNexample} also shows an example of transition firing.
Markings are written as formal sums showing how many tokens of what colors
are in the different places. The transition $t$ can fire only once, as the
inhibitor arc (for unlabelled inhibitor arcs we assume the default weight 1) from
place $p_3$ to transition $t$ inhibits the second transition firing.

\begin{figure}[!t]
\centering
\begin{subfigure}[b]{\textwidth}
\begin{center}
    \begin{adjustbox}{width=.8\linewidth} 

\tikzset{every picture/.style={line width=0.75pt}} 

\begin{tikzpicture}[x=0.75pt,y=0.75pt,yscale=-1,xscale=1]

\draw   (150.5,214.25) .. controls (150.22,204.17) and (158.17,196) .. (168.25,196) .. controls (178.33,196) and (186.72,204.17) .. (187,214.25) .. controls (187.28,224.33) and (179.33,232.5) .. (169.25,232.5) .. controls (159.17,232.5) and (150.78,224.33) .. (150.5,214.25) -- cycle ;
\draw  [color={rgb, 255:red, 0; green, 0; blue, 0 }  ,draw opacity=1 ] (440.5,177.25) .. controls (440.22,167.17) and (448.17,159) .. (458.25,159) .. controls (468.33,159) and (476.72,167.17) .. (477,177.25) .. controls (477.28,187.33) and (469.33,195.5) .. (459.25,195.5) .. controls (449.17,195.5) and (440.78,187.33) .. (440.5,177.25) -- cycle ;
\draw   (150.5,135.25) .. controls (150.22,125.17) and (158.17,117) .. (168.25,117) .. controls (178.33,117) and (186.72,125.17) .. (187,135.25) .. controls (187.28,145.33) and (179.33,153.5) .. (169.25,153.5) .. controls (159.17,153.5) and (150.78,145.33) .. (150.5,135.25) -- cycle ;
\draw  [color={rgb, 255:red, 0; green, 0; blue, 0 }  ,draw opacity=1 ][fill={rgb, 255:red, 0; green, 0; blue, 0 }  ,fill opacity=1 ] (313.99,160.59) -- (313.69,194.18) -- (303.7,194.09) -- (304,160.5) -- cycle ;
\draw    (187,135.25) -- (300.13,169.63) ;
\draw [shift={(303,170.5)}, rotate = 196.9] [fill={rgb, 255:red, 0; green, 0; blue, 0 }  ][line width=0.08]  [draw opacity=0] (8.93,-4.29) -- (0,0) -- (8.93,4.29) -- cycle    ;
\draw    (188,214.25) -- (300.1,184.27) ;
\draw [shift={(303,183.5)}, rotate = 525.03] [fill={rgb, 255:red, 0; green, 0; blue, 0 }  ][line width=0.08]  [draw opacity=0] (8.93,-4.29) -- (0,0) -- (8.93,4.29) -- cycle    ;
\draw    (308.84,177.34) .. controls (346.24,148.1) and (404.61,148.47) .. (439.87,170.15) ;
\draw [shift={(442,171.5)}, rotate = 213.31] [fill={rgb, 255:red, 0; green, 0; blue, 0 }  ][line width=0.08]  [draw opacity=0] (8.93,-4.29) -- (0,0) -- (8.93,4.29) -- cycle    ;
\draw    (320.23,185.66) .. controls (361.76,209.18) and (404.78,201.16) .. (443,184.5) ;
\draw [shift={(317.69,184.18)}, rotate = 30.9] [color={rgb, 255:red, 0; green, 0; blue, 0 }  ][line width=0.75]      (0, 0) circle [x radius= 3.35, y radius= 3.35]   ;

\draw (346,135.4) node [anchor=north west][inner sep=0.75pt]    {$1'( x-1,y)$};
\draw (224,127.4) node [anchor=north west][inner sep=0.75pt]    {$1'( x)$};
\draw (224,203.4) node [anchor=north west][inner sep=0.75pt]    {$1'( y)$};
\draw (304,201.4) node [anchor=north west][inner sep=0.75pt]    {$t$};
\draw (290,141.4) node [anchor=north west][inner sep=0.75pt]    {$x< 1$};
\draw (111,114) node [anchor=north west][inner sep=0.75pt]   [align=left] {[$A$]};
\draw (113,194) node [anchor=north west][inner sep=0.75pt]   [align=left] {[$B$]};
\draw (444,134) node [anchor=north west][inner sep=0.75pt]   [align=left] {[$AB$]};
\draw (112,133.4) node [anchor=north west][inner sep=0.75pt]   {$ \begin{array}{l}
2'( 0)\\
3'( 2)
\end{array}$};
\draw (112,213.4) node [anchor=north west][inner sep=0.75pt]     {$2'( 5)$};
\draw (497,124) node [anchor=north west][inner sep=0.75pt]   [align=left] {\textbf{Declarations:}\\color set $A=([0,2])$\\color set $B=([4,5])$\\color set $AB=A \times B$ \\variable $x: A$\\variable $y: B$};
\draw (162,161.4) node [anchor=north west][inner sep=0.75pt]    {$p_{1}$};
\draw (162,241.4) node [anchor=north west][inner sep=0.75pt]    {$p_{2}$};
\draw (451,204.4) node [anchor=north west][inner sep=0.75pt]    {$p_{3}$};
\draw (372,205.4) node [anchor=north west][inner sep=0.75pt]    {$$};

\end{tikzpicture}

    \end{adjustbox}

    \begin{adjustbox}{width=0.95\linewidth} 
\tikzset{every picture/.style={line width=0.75pt}} 
\begin{tikzpicture}[x=0.75pt,y=0.75pt,yscale=-1,xscale=1]
\draw (16,158.4) node [anchor=north west][inner sep=0.75pt]  [font=\large]
{$p_{1} :( 2'( 0) +3'( 2)) \ +\ p_{2} :2'( 5)$};
\draw (268,151.4) node [anchor=north west][inner sep=0.75pt]  [font=\Large]  {$\xrightarrow{t}$};
\draw (313,158.4) node [anchor=north west][inner sep=0.75pt]  [font=\large]
{$p_{1} :( 1'( 0) +3'( 2))  \ +\ p_{2} :1'( 5)  \ +\ p_{3} :( 1'(( 2,5)))$};
\end{tikzpicture}

    \end{adjustbox}       %
\end{center}
\caption{Integer CPN and transition firing under the binding $\langle x=0,y=5\rangle $}
\label{fig:CPNexample}
\end{subfigure}

\begin{subfigure}[b]{\textwidth}
    \centering
    \begin{adjustbox}{width=0.5\linewidth} 
    \input{TikzFigures/UnfoldedNetExample}
    \end{adjustbox}       %
\caption{Unfolding of CPN from Figure~\ref{fig:CPNexample} }
   \label{fig:naiveUnfolding}
\end{subfigure}
   \caption{A CPN and its Unfolding to a P/T net}
\end{figure}

\medskip
We shall now present the formal syntax of arc expressions and guards.
In integer CPNs,
each arc $(P \times T) \cup (T \times P)$ excluding inhibitor arcs is assigned an arc expression $\alpha$ given by the syntax:
\begin{align*}
    &\alpha ::= n'(\tau_1,...,\tau_k) \ |\ \alpha_1 \pm \alpha_2\ | \ n \cdot \alpha \\
    &\tau ::= c \ | \ x \ | \ x \pm s\
\end{align*}
where $c \in \mathbb{C}$, $x \in \mathcal{V}$, $s \in \Ngtzero$, $n \in \Ngtzero$ and $\pm ::= + \ | \ -$.
The $\alpha$ expressions allow us to make tuples consisting of $\tau$ expressions
that can be combined using multiset union ($+$) and subtraction ($-$), or multiplied by $n$ (creating $n$
copies of the tuple). The expressions of type $\tau$ are called simple arc expressions.
The semantics of arc expressions is straightforward and demonstrated by the following example.
\begin{example}
Let $a$ be an arc annotated by the arc expression $1'(x-1) + 1'(y + 1) + 1'(z)$ and
let $b_1 = \lanrang{x=3,y=3,z=1}$ and $b_2 = \lanrang{x=1,y=2,z=2}$ be bindings with integer range $([1,3])$ over
the variables $x, y$ and $z$. The CPN semantics of the arc $a$ is defined as the multiset
where $W(a,b_1) = 1'(2) + 2'(1)$ since the colors are cyclic in nature such that $3+1 = 1$
and $W(a,b_2) = 2'(3) + 1'(2)$ because $1-1 = 3$.
\end{example}

Guards in integer CPNs are expressed by the following syntax:
\begin{equation*}
    \gamma ::= \mathit{true}\ |\ \mathit{false}\ |\ \neg\gamma\ |\ \gamma_1 \wedge \gamma_2\ |\ \gamma_1 \vee \gamma_2\ |\ \alpha_1 = \alpha_2\ |\ \alpha_1 \neq \alpha_2\ |\ \tau_1 \bowtie \tau_2 \
\end{equation*}
where $\bowtie \ ::= <\ |\ \leq\ |\ >\ |\ \geq\ |\ =\ |\ \neq$
and where $\alpha_1$ and $\alpha_2$ are arc expressions and $\tau_1$ and $\tau_2$ are simple arc expressions.
The semantics of guards (their evaluation to true or false
in a given binding) is
also straightforward and demonstrated by an example.
\begin{example}
Let $g = (x > 2 \wedge y = 2) \vee z + 2 = 3$ be a guard on a transition $t$
and let $b_1 = \lanrang{x=3,y=3,z=1}$ and $b_2 = \lanrang{x=1,y=2,z=2}$ be bindings with range $([1,3])$ over
the variables $x, y$ and $z$. Then $G(t,b_1) = \textit{true}$ and $G(t,b_2) = \textit{false}$.
\end{example}


The Model Checking Contest~\cite{mcc:2021} further includes color types called dots and cyclic enumerations which are excluded from these definitions as these can be trivially translated to tuples of integer ranges. The color type dot, $\{\bullet\}$, is represented
by the color domain $([1])$ and a cyclic enumeration with elements $\{e_1,e_2,...,e_n\}$ is encoded as the integer range  $([1,n])$
corresponding to the indices of the cyclic enumeration. Furthermore, the contest uses the \textit{.all} expression,
which creates one of each color in the color domain. For example
if $A = ([0,2])$ then $A.all = 1'(1) + 1'(2) + 1'(3)$.

All examples in this paper are expressed in integer CPN syntax.

\subsection{Unfolding of CPNs}
CPNs with finite color domains can be \textit{unfolded} into an equivalent P/T net~\cite{JensenCPNBook}.
Each place $p$ is unfolded into $|C(p)|$ places, a transition is made for each legal binding
and we translate the multiset of colors on the arc to a multiset over $\bullet$.
We now provide a formal definition of unfolding in our syntax, following the approach
from~\cite{KlostergaardThesis,LattePaper} that also consider inhibitor arcs.

For each place connected to an inhibitor arc, we create a fresh summation place that contains
the sum of tokens across the rest of the unfolded places. The summation places are created
to ensure that inhibitor arcs work correctly after unfolding.

\begin{definition}[Unfolding]\label{Def:NaiveUnfolding}
Let $\mathcal{N} = \CPN{}$ be a colored Petri net. The unfolded P/T net
$\mathcal{N}^u = \CPN{u}$ is given by
\begin{enumerate}
 \itemsep=0.94pt
    \item $P^u = \{p(c) \ | \ p \in P \wedge c \in C(p)\} \cup \{ p({\textbf{sum}}) \ | \ t \in T, p \in \presetstar{t}\},$
    \item $T^u = \bigcup_{t \in T}\bigcup_{b \in B(t)} t(b),$
    \item $\mathbb{C}^u = \{\bullet\}$,
    \item $\mathbb{B}^u = \{\bempty\}$,
    \item $C^u(p(c)) = \{\bullet\}$ for all $p(c)\in P^u$,
    \item $G^u(t(b),\bempty) = \textit{true}$ for all $t(b) \in T^u$,
    \item $W^u((p(c),t(b)),\bempty) = W((p,t), b)(c)'(\bullet) \text{ and }
    W^u((t(b),p(c)),\bempty) = \\ W((t,p), b)(c)'(\bullet)$ for all \ $p(c) \in P^u$ and \ $t(b) \in T^u$, \textit{and} \\
    $W^u((p(\textbf{sum}),t(b)),\bempty) = |W((p,t), b)|'(\bullet) \text{ and }  W^u((t(b),p(\textbf{sum})),\bempty)
  = \\ |W((t,p), b)|'(\bullet)$ for all $p(\textbf{sum}) \in P^u$ and $t(b) \in T^u$,
    \item $W_I^u(p(\textbf{sum}),t(b)) = W_I(p,t)$ for all $p(\textbf{sum}) \in P^u$ and $t(b) \in T^u$, and
    \item $M_{0}^u(p(c)) = M_0(p)(c)'(\bullet)$ for all $p(c) \in P^u$ \textit{ and }\\
    $M_{0}^u(p(\textbf{sum})) = |M_0(p)|'(\bullet)$ for all $p(\textbf{sum}) \in P^u$
\end{enumerate}
where $p(\textbf{sum})$ denotes the sum of all tokens regardless of the color for the place $p$.
\end{definition}\normalsize

Consider the CPN in Figure~\ref{fig:CPNexample}. The unfolded version of this can be seen
in Figure~\ref{fig:naiveUnfolding}. We notice that each place of the CPN is unfolded to a fresh new place for every color in the color type of the place as well as a \textbf{{sum}} place for $p_3$. Additionally, the transition is unfolded to a new transition for each legal binding.

The theorem showing that the unfolded net is bisimilar to the original CPN
was proved in~\cite{KlostergaardThesis,LattePaper}; we only added a small optimization on
the summation places.

\begin{theorem}[\cite{KlostergaardThesis,LattePaper}] \label{theorem:BisimilarUnfolding}
Given a colored Petri net $\mathcal{N} = \CPN{}$ and the unfolded P/T net $\mathcal{N}^u = \CPN{u}$, it holds that
$M_0 \sim M_0^u$ with the labeling function $\ell(t(b)) = t$ for all $t(b) \in T^u$.
\end{theorem}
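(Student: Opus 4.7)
The plan is to exhibit a bisimulation relation between markings of $\mathcal{N}$ and markings of $\mathcal{N}^u$, and verify that the initial markings are related. Define
\[
R = \{ (M, M^u) \mid \forall p \in P.\ \forall c \in C(p).\ M^u(p(c)) = M(p)(c)'(\bullet) \ \wedge \ \forall p(\textbf{sum}) \in P^u.\ M^u(p(\textbf{sum})) = |M(p)|'(\bullet)\}.
\]
The pair $(M_0, M_0^u)$ lies in $R$ directly by clause 9 of Definition~\ref{Def:NaiveUnfolding}. It then suffices to show that $R$ is a bisimulation with respect to the labeling $\ell(t(b)) = t$.

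For the forward direction, I would assume $(M, M^u) \in R$ and $M \xrightarrow{t} M'$ witnessed by a binding $b \in B(t)$, and show $M^u \xrightarrow{t(b)} M'^u$ with $(M', M'^u) \in R$. Since $b \in B(t)$, the unfolded transition $t(b) \in T^u$ exists with $G^u(t(b), \bempty) = \textit{true}$. Enabledness at colored places follows pointwise: for each $p \in \preset{t}$ and $c \in C(p)$, clause 7 gives $W^u((p(c),t(b)),\bempty) = W((p,t),b)(c)'(\bullet)$, and since $W((p,t),b) \subseteq M(p)$ we get $W((p,t),b)(c) \leq M(p)(c) = M^u(p(c))(\bullet)$. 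For inhibitor arcs, clauses 7--8 and membership in $R$ give $M^u(p(\textbf{sum}))(\bullet) = |M(p)| < W_I(p,t) = W_I^u(p(\textbf{sum}),t(b))$. The resulting marking $M'^u$ obtained from clauses 7 satisfies $M'^u(p(c)) = (M(p)(c) - W((p,t),b)(c) + W((t,p),b)(c))'(\bullet) = M'(p)(c)'(\bullet)$, and analogously $M'^u(p(\textbf{sum})) = |M'(p)|'(\bullet)$ by summing over $c$, so $(M', M'^u) \in R$. The label matches since $\ell(t(b)) = t$.

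For the backward direction, I would assume $(M, M^u) \in R$ and $M^u \xrightarrow{t(b)} M'^u$, so by construction $t(b) \in T^u$ means $b \in B(t)$ and hence $G(t,b) = \textit{true}$. Enabledness of $t$ under $b$ at each $p \in P$ in $M$ follows by summing the colorwise inequalities $W((p,t),b)(c) \leq M(p)(c)$ given by enabledness at each $p(c)$, and the inhibitor condition transfers via the summation place $p(\textbf{sum})$. Defining $M'$ by the CPN firing rule and comparing with the update of $M'^u$ coming from clause 7 gives $(M', M'^u) \in R$, again with matching label $t = \ell(t(b))$. The main technical obstacle is the bookkeeping with the summation places: one has to verify that $|M(p)| = \sum_{c \in C(p)} M(p)(c)$ is preserved by firing in the sense that the unfolded update of the summation place, which depends on $|W((p,t),b)|$ and $|W((t,p),b)|$, coincides with $|M'(p)|$, and that the inhibitor arc condition $W_I(p,t) > |M(p)|$ is equivalent to its unfolded counterpart whenever $p \in \presetstar{t}$. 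This is routine once one unfolds the cardinality definitions, and the rest of the argument is a direct pointwise translation of the CPN firing rule into the P/T firing rule.
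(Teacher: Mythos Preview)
The paper does not actually prove this theorem; it attributes the result to \cite{KlostergaardThesis,LattePaper} and only remarks that a small optimization on the summation places was added. So there is no in-paper proof to compare against.

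That said, your proposal is the standard and correct approach: the relation $R$ you define is precisely the natural correspondence induced by Definition~\ref{Def:NaiveUnfolding}, and your two-direction check of the bisimulation transfer condition is sound. A couple of minor remarks on presentation: in the backward direction you do not need to ``sum'' the colorwise inequalities to obtain $W((p,t),b) \subseteq M(p)$---this is exactly the conjunction of the per-color inequalities $W((p,t),b)(c) \leq M(p)(c)$, which you already have from enabledness at each $p(c)$. Also, it is worth stating explicitly that for places $p(c)$ with $c \in C(p)$ the inhibitor weight $W_I^u(p(c),t(b))$ is $\infty$ by convention (only $p(\textbf{sum})$ carries a finite inhibitor weight by clause~8), so those places never inhibit. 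With these clarifications your argument goes through without difficulty.
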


\section{Color quotienting} \label{Sec:Partitioning}
Unfolding a CPN without any further analysis will often lead to many unnecessary places and transitions.
We shall now present our first technique that allows to group equivalently behaving colors into
equivalence classes in order to reduce the number of colors and hence also to reduce
the size of the unfolded net.

\medskip
As an example consider the CPN in Figure~\ref{fig:UnstablePartition}, the unfolded version of this net
adds five places for both $p_1$ and $p_2$. However, we see that in $p_1$ all colors greater than or equal
to $3$ behave exactly the same throughout the net and can thus be represented by a single color.
We can thus \textit{quotient} the CPN by \textit{partitioning} the color domain of each place into a
number of \textit{equivalence classes} of colors such that the colors behaving equivalently are represented
by the same equivalence class.
Using this approach, we can construct a bisimilar CPN seen in Figure~\ref{fig:partitionUnfolding}
where the color $3$ now represents all colors greater than or equal to $3$.

\medskip
Such a reduction in the number of colors is possible to include already during the design of a CPN model,
however, the models may look less intuitive for human modeller or the nets can be auto-generated
and hence contain redundant/equivalent colors as observed in the benchmark of CPN models from
the annual Model Checking Contest benchmark~\cite{mcc:2021}.

\begin{figure}[th]
\centering
\begin{subfigure}[b]{\textwidth}
    \centering
    \begin{adjustbox}{width=0.7\linewidth} 
    \tikzset{every picture/.style={line width=0.75pt}} 

\begin{tikzpicture}[x=0.75pt,y=0.75pt,yscale=-1,xscale=1]

\draw   (100,151.75) .. controls (100,141.95) and (108.06,134) .. (118,134) .. controls (127.94,134) and (136,141.95) .. (136,151.75) .. controls (136,161.55) and (127.94,169.5) .. (118,169.5) .. controls (108.06,169.5) and (100,161.55) .. (100,151.75) -- cycle ;
\draw    (136,151.75) -- (215,151.51) ;
\draw [shift={(218,151.5)}, rotate = 539.8299999999999] [fill={rgb, 255:red, 0; green, 0; blue, 0 }  ][line width=0.08]  [draw opacity=0] (8.93,-4.29) -- (0,0) -- (8.93,4.29) -- cycle    ;
\draw  [fill={rgb, 255:red, 0; green, 0; blue, 0 }  ,fill opacity=1 ] (219,135) -- (227,135) -- (227,168.5) -- (219,168.5) -- cycle ;
\draw   (309,150.75) .. controls (309,140.95) and (317.06,133) .. (327,133) .. controls (336.94,133) and (345,140.95) .. (345,150.75) .. controls (345,160.55) and (336.94,168.5) .. (327,168.5) .. controls (317.06,168.5) and (309,160.55) .. (309,150.75) -- cycle ;
\draw    (228,150.75) -- (306,150.75) ;
\draw [shift={(309,150.75)}, rotate = 180] [fill={rgb, 255:red, 0; green, 0; blue, 0 }  ][line width=0.08]  [draw opacity=0] (8.93,-4.29) -- (0,0) -- (8.93,4.29) -- cycle    ;
\draw  [fill={rgb, 255:red, 0; green, 0; blue, 0 }  ,fill opacity=1 ] (427,135) -- (435,135) -- (435,168.5) -- (427,168.5) -- cycle ;
\draw    (346,150.75) -- (424,150.75) ;
\draw [shift={(427,150.75)}, rotate = 180] [fill={rgb, 255:red, 0; green, 0; blue, 0 }  ][line width=0.08]  [draw opacity=0] (8.93,-4.29) -- (0,0) -- (8.93,4.29) -- cycle    ;

\draw (98,172.4) node [anchor=north west][inner sep=0.75pt]    {$p_{1}$};
\draw (119,170) node [anchor=north west][inner sep=0.75pt]   [align=left] {[$A$]};
\draw (159,130.4) node [anchor=north west][inner sep=0.75pt]    {$1'( x)$};
\draw (248,130.4) node [anchor=north west][inner sep=0.75pt]    {$1'( x)$};
\draw (306,172.4) node [anchor=north west][inner sep=0.75pt]    {$p_{2}$};
\draw (328,170) node [anchor=north west][inner sep=0.75pt]   [align=left] {[$A$]};
\draw (217,112.4) node [anchor=north west][inner sep=0.75pt]    {$t_{1}$};
\draw (206,174.4) node [anchor=north west][inner sep=0.75pt]    {$x< 3$};
\draw (57,144.4) node [anchor=north west][inner sep=0.75pt]    {$A.all$};
\draw (367,130.4) node [anchor=north west][inner sep=0.75pt]    {$1'( x)$};
\draw (423,112.4) node [anchor=north west][inner sep=0.75pt]    {$t_{2}$};
\draw (468,121) node [anchor=north west][inner sep=0.75pt]   [align=left] {\textbf{Declarations:}\\Color set $\displaystyle A=([ 1,5])$\\variable $\displaystyle x:A$};
\draw (412,174.4) node [anchor=north west][inner sep=0.75pt]    {$x\leq 1$};

\end{tikzpicture}
    \end{adjustbox}
   \caption{Example CPN}
   \label{fig:UnstablePartition}
\end{subfigure}
\begin{subfigure}[b]{\textwidth}
    \centering
    \begin{adjustbox}{width=0.7\linewidth} 
    \tikzset{every picture/.style={line width=0.75pt}} 

\begin{tikzpicture}[x=0.75pt,y=0.75pt,yscale=-1,xscale=1]

\draw   (120,131.75) .. controls (120,121.95) and (128.06,114) .. (138,114) .. controls (147.94,114) and (156,121.95) .. (156,131.75) .. controls (156,141.55) and (147.94,149.5) .. (138,149.5) .. controls (128.06,149.5) and (120,141.55) .. (120,131.75) -- cycle ;
\draw    (156,131.75) -- (235,131.51) ;
\draw [shift={(238,131.5)}, rotate = 539.8299999999999] [fill={rgb, 255:red, 0; green, 0; blue, 0 }  ][line width=0.08]  [draw opacity=0] (8.93,-4.29) -- (0,0) -- (8.93,4.29) -- cycle    ;
\draw  [fill={rgb, 255:red, 0; green, 0; blue, 0 }  ,fill opacity=1 ] (239,115) -- (247,115) -- (247,148.5) -- (239,148.5) -- cycle ;
\draw   (329,130.75) .. controls (329,120.95) and (337.06,113) .. (347,113) .. controls (356.94,113) and (365,120.95) .. (365,130.75) .. controls (365,140.55) and (356.94,148.5) .. (347,148.5) .. controls (337.06,148.5) and (329,140.55) .. (329,130.75) -- cycle ;
\draw    (248,130.75) -- (326,130.75) ;
\draw [shift={(329,130.75)}, rotate = 180] [fill={rgb, 255:red, 0; green, 0; blue, 0 }  ][line width=0.08]  [draw opacity=0] (8.93,-4.29) -- (0,0) -- (8.93,4.29) -- cycle    ;
\draw  [fill={rgb, 255:red, 0; green, 0; blue, 0 }  ,fill opacity=1 ] (447,115) -- (455,115) -- (455,148.5) -- (447,148.5) -- cycle ;
\draw    (366,130.75) -- (444,130.75) ;
\draw [shift={(447,130.75)}, rotate = 180] [fill={rgb, 255:red, 0; green, 0; blue, 0 }  ][line width=0.08]  [draw opacity=0] (8.93,-4.29) -- (0,0) -- (8.93,4.29) -- cycle    ;

\draw (118,152.4) node [anchor=north west][inner sep=0.75pt]    {$p_{1}$};
\draw (139,150) node [anchor=north west][inner sep=0.75pt]   [align=left] {[\textit{A}]};
\draw (179,110.4) node [anchor=north west][inner sep=0.75pt]    {$1'( x)$};
\draw (268,110.4) node [anchor=north west][inner sep=0.75pt]    {$1'( x)$};
\draw (326,152.4) node [anchor=north west][inner sep=0.75pt]    {$p_{2}$};
\draw (348,150) node [anchor=north west][inner sep=0.75pt]   [align=left] {[\textit{A}]};
\draw (237,92.4) node [anchor=north west][inner sep=0.75pt]    {$t_{1}$};
\draw (226,154.4) node [anchor=north west][inner sep=0.75pt]    {$x < 3$};
\draw (77,114.4) node [anchor=north west][inner sep=0.75pt]    {$ \begin{array}{l}
1'( 1)\\
1'( 2)\\
3'( 3)
\end{array}$};
\draw (387,110.4) node [anchor=north west][inner sep=0.75pt]    {$1'( x)$};
\draw (443,92.4) node [anchor=north west][inner sep=0.75pt]    {$t_{2}$};
\draw (488,101) node [anchor=north west][inner sep=0.75pt]   [align=left] {\textbf{Declarations:}\\Color set $\displaystyle A=([ 1,3])$\\variable $\displaystyle x:A$\\\\Color 1 represents $\displaystyle ([ 1])$\\Color 2 represents $\displaystyle ([ 2])$\\Color 3 represents $\displaystyle ([ 3,5])$};
\draw (432,154.4) node [anchor=north west][inner sep=0.75pt]    {$x\leq 1$};

\end{tikzpicture}
    \end{adjustbox}
   \caption{Quotiented net from Figure~\ref{fig:UnstablePartition}}
   \label{fig:partitionUnfolding}
\end{subfigure}\vspace{2mm}
\begin{subfigure}[b]{\textwidth}
    \centering
    \begin{adjustbox}{width=0.50\linewidth} 
    \tikzset{every picture/.style={line width=0.75pt}} 

\begin{tikzpicture}[x=0.75pt,y=0.75pt,yscale=-1,xscale=1]

\draw (202,132.4) node [anchor=north west][inner sep=0.75pt]  [font=\normalsize]  {$M_{1} =p_{1} :1'( 1)$};
\draw (326,128.4) node [anchor=north west][inner sep=0.75pt]  [font=\Large]  {$\xrightarrow{t_{1}}$};
\draw (202,192.4) node [anchor=north west][inner sep=0.75pt]  [font=\normalsize]  {$M_{2} =p_{1} :1'( 2)$};
\draw (374,134.4) node [anchor=north west][inner sep=0.75pt]  [font=\normalsize]  {$M_{1} '=p_{2} :1'( 1)$};
\draw (374,192.4) node [anchor=north west][inner sep=0.75pt]  [font=\normalsize]  {$M_{2} '=p_{2} :1'( 2)$};
\draw (326,184.4) node [anchor=north west][inner sep=0.75pt]  [font=\Large]  {$\xrightarrow{t_{1}}$};
\draw (150,163.4) node [anchor=north west][inner sep=0.75pt]    {$M_{1} \dequiv M_{2}$};
\draw (473,164.4) node [anchor=north west][inner sep=0.75pt]    {$M_{1} ' \centernot{\dequiv} M_{2} '$};
\draw (191,93.4) node [anchor=north west][inner sep=0.75pt]  [font=\normalsize]  {$\delta ( p_{1}) =\{([ 1,2]) ,([3,5])\} ,\delta ( p_{2}) =\{([ 1]) ,([ 2,5])\}$};

\end{tikzpicture}
    \end{adjustbox}
   \caption{Example of an unstable partition $\delta$ and markings showing why it is unstable}
   \label{fig:UnstablePartitionFiring}
\end{subfigure}\vspace{2mm}
\begin{subfigure}[b]{\textwidth}
    \centering
    \begin{adjustbox}{width=0.5\linewidth} 
    \tikzset{every picture/.style={line width=0.75pt}} 

\begin{tikzpicture}[x=0.75pt,y=0.75pt,yscale=-1,xscale=1]

\draw (190,115.4) node [anchor=north west][inner sep=0.75pt]  [font=\normalsize]  {$\delta '( p_{1}) =\{([ 1]) ,([ 2]) ,([ 3,5])\} ,\delta '( p_{2}) =\{([ 1]) ,([ 2,5])\}$};

\end{tikzpicture}
    \end{adjustbox}       %
   \caption{Example of stable partition $\delta'$}
    \label{fig:StablePartition}
\end{subfigure}\vspace*{-2mm}
\caption{Quotienting example}
\end{figure}

We thus introduce \textit{color partition} on places where
all colors with similar behaviour in a given place are grouped into an \textit{equivalence class},
denoted by $\theta$. For the rest of this section, let us assume a fixed CPN $\mathcal{N} = \CPN{}$.
A partition $\delta$ is a function $\delta : P \xrightarrow{} 2^{2^\mathbb{C}} \setminus \emptyset$
that for a place $p$ returns the equivalence classes of $C(p)$ such that $(\Bcup{\theta \in \delta(p)} \theta) = C(p)$ and $\theta_1 \cap \theta_2 = \emptyset$ for all $\theta_1,\theta_2 \in \delta (p)$ where $\theta_1 \neq \theta_2$.

\begin{definition}\label{def:dequiv} 
Given a partition $\delta$ and markings $M$ and $M'$, we write $M(p) \dequiv M'(p)$ for a $p \in P$ iff for all $\theta \in \delta(p)$ it holds that
 $\Bsum{c \in \theta}M(p)(c) = \textstyle \sum_{c \in \theta}M'(p)(c)$.
We write $M \dequiv M'$ iff $M(p) \dequiv M'(p)$ for all $p \in P$.
A partition $\delta$ is \emph{stable} if the relation $\dequiv$ on markings induced by $\delta$
is a bisimulation.
\end{definition}
\eject



Consider the CPN in Figure~\ref{fig:UnstablePartition}.
The partition shown in the Figure~\ref{fig:UnstablePartitionFiring} is not stable
as demonstrated by the transition firing from $M_1$ and $M_2$ to $M_1'$ resp. $M_2'$
where $M_1 \dequiv M_2$ but $M_1' \centernot{\dequiv} M_2'$.
Figure~\ref{fig:StablePartition} shows an example of a stable partition (here
we describe the partition with integer ranges in the same manner as in integer CPNs).

We now show how a CPN can  be quotiented using a stable partition.
First, we define the notion of binding equivalence under a partition.
\begin{definition}\label{Def:PartBindingEquiv} 
Given a partition $\delta$, a transition $t \in T$ and bindings $b,b' \in B(t)$, we write $b \dtequiv b'$ iff for all $p \in \preset{t}$ and for all $ \theta \in \delta(p)$ it holds that
\begin{equation*}
    \Bsum{c \in \theta}W((p,t),b)(c) = \Bsum{c \in \theta}W((p,t),b')(c)
\end{equation*}
and for all $p \in \postset{t}$ and for all $\theta \in \delta(p)$ it holds that
\begin{equation*}
    \Bsum{c \in \theta}W((t,p),b)(c) = \Bsum{c \in \theta}W((t,p),b')(c).
\end{equation*}
\end{definition}

We can now define classes of equivalent bindings given a partition $\delta$
which are bindings that have the same behaviour for a given transition, formally
$B^\delta(t)\! \defeq\!  \{[b]_t \ | \ b\! \in\! B(t) \} \text{ where } [b]_t\! = \{b' \ | \ b' \dtequiv~b\}$.

\smallskip
For a given  stable partition, we now construct a quotiented CPN where the set of colors
are the equivalence classes of the stable partition and
the set of bindings are the equivalence classes of bindings.
As such, we rewrite the arc and guard evaluation functions to instead consider an equivalence class
of bindings, which is possible since each binding in the equivalence class behaves equivalently.

\begin{definition} \label{def:partitionedCPN} 
Let $\mathcal{N} = \CPN{}$ be a CPN and $\delta$ a stable partition of $\mathcal{N}$.
The quotiented CPN $\mathcal{N}^\delta = \deltasCPN$ is defined by
\begin{enumerate}
    \item $\mathbb{C}^\delta = \Bcup{p \in P} \delta(p)$
    \item $\mathbb{B}^\delta = \biguplus_{t \in T} B^\delta(t)$.
    \item \label{item:guards}
    $G^\delta(t, [b]_t) = G(t, b)$ for all $t \in T$ and $[b]_t \in B(t)$,
    \item $C^\delta (p) = \delta (p)$ for all $p \in P$,
    \item \label{item:arcs}
    $W^\delta((p,t),[b]_t) = S \text{ where } S(\theta) = \Bsum{c \in \theta}W((p,t),b)(c) \text{ for all } \theta \in \delta(p)$ and\\
    $W^\delta((t,p),[b]_t) = S \text{ where } S(\theta) = \Bsum{c \in \theta}W((t,p),b)(c) \text{ for all } \theta \in \delta(p)$ \\ for all $p \in P$, $t \in T$ and $[b]_t \in \mathbb{B}^\delta$,
    \item \label{item:inhibarcs}
    $W_I^\delta(p,t)$ = $W_I(p,t)$ \textit{ for all $p \in P$ and $t \in T$,  and }
    \item \label{item:initialMarking} $\Mdelta_0 (p) =  S \text{ where } S(\theta) = \Bsum{c \in \theta}M_0(p)(c)$ for all $p \in P$ and $\theta \in \delta (p)$.
\end{enumerate}
\end{definition}

We can now present our main correctness theorem, stating that the original and quotiented colored nets
are bisimilar.

\begin{theorem}
Let $\mathcal{N} = \CPN{}$ be a CPN, $\delta$ a stable partition and $\mathcal{N}^\delta = \CPN{\delta}$
the quotiented CPN. Then $M_0 \sim \Mdelta_0$.
\end{theorem}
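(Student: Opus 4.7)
The plan is to exhibit an explicit bisimulation $R$ between markings of $\mathcal{N}$ and of $\mathcal{N}^\delta$ that contains the pair $(M_0, \Mdelta_0)$. The natural candidate is
\[ R = \bigl\{ (M, M') \mid \forall p \in P,\ \forall \theta \in \delta(p):\ \textstyle\sum_{c \in \theta} M(p)(c) = M'(p)(\theta) \bigr\}, \]
where $M$ ranges over markings of $\mathcal{N}$ and $M'$ over markings of $\mathcal{N}^\delta$. The base case $(M_0, \Mdelta_0) \in R$ is immediate from item~\ref{item:initialMarking} of Definition~\ref{def:partitionedCPN}.

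For the forward direction, suppose $(M, M') \in R$ and $M \xrightarrow{a} N$ in $\mathcal{N}$ via transition $t$ with $\ell(t) = a$ and binding $b \in B(t)$. Item~\ref{item:guards} gives $G^\delta(t, [b]_t) = G(t, b) = \textit{true}$, and item~\ref{item:arcs} with the hypothesis yields
\[ W^\delta((p,t), [b]_t)(\theta) = \textstyle\sum_{c \in \theta} W((p,t), b)(c) \leq \sum_{c \in \theta} M(p)(c) = M'(p)(\theta) \]
for every $p$ and $\theta \in \delta(p)$. Inhibitor arcs remain safe since $|M'(p)| = |M(p)|$ and $W_I^\delta = W_I$ by item~\ref{item:inhibarcs}. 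Hence $M' \xrightarrow{a} N'$ in $\mathcal{N}^\delta$ via $[b]_t$, and propagating the same sum-over-$\theta$ identity through the firing rule shows $(N, N') \in R$.

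The hard part is the converse, for which stability of $\delta$ is essential: the aggregated binding $[b]_t$ may be legal at $M'$ even though no representative $b$ is enabled at $M$ with the original per-color counts. Given $M' \xrightarrow{a} N'$ in $\mathcal{N}^\delta$ via $t$ and $[b]_t$ with representative $b$, I first construct an auxiliary $M^* \in \mathbb{M}(\mathcal{N})$ with $M^* \dequiv M$ at which $b$ is actually enabled: for each $p$ and $\theta \in \delta(p)$, redistribute the $M'(p)(\theta) = \sum_{c \in \theta} M(p)(c)$ tokens across the colors in $\theta$ so that $W((p,t),b)$ restricted to $\theta$ is contained, which is possible since $\sum_{c \in \theta} W((p,t),b)(c) = W^\delta((p,t),[b]_t)(\theta) \leq M'(p)(\theta)$. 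Firing $t$ with $b$ at $M^*$ yields $N^* \in \mathbb{M}(\mathcal{N})$ with $(N^*, N') \in R$ by the forward calculation. Stability of $\delta$ then transports this firing: from $M \dequiv M^*$ and $M^* \xrightarrow{a} N^*$ we obtain $N$ with $M \xrightarrow{a} N$ and $N \dequiv N^*$, which combined with $(N^*, N') \in R$ gives $(N, N') \in R$.

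The main obstacle is the backward step: the forward direction is pure arithmetic bookkeeping, but in the converse the chosen representative $b$ of $[b]_t$ need not be firable at $M$, so one must realign the per-color distribution inside each equivalence class to create a firable witness $M^*$ and then invoke the stability of $\delta$ to translate the firing back to the original $M$.
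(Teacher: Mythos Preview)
Your proposal is correct and follows essentially the same route as the paper: the same relation $R$ is exhibited, the forward direction is the same arithmetic bookkeeping via items~\ref{item:arcs} and~\ref{item:inhibarcs}, and the backward direction is handled by constructing an auxiliary marking (your $M^*$, the paper's $M_1$) in which the representative $b$ is genuinely enabled, then invoking stability of $\delta$ to transport the firing back to $M$. Your presentation of the backward step is in fact slightly more explicit than the paper's---you spell out that the forward calculation applied to the pair $(M^*,M')$ yields $(N^*,N')\in R$, and then chain this with $N \dequiv N^*$ to obtain $(N,N')\in R$---whereas the paper compresses this into ``(e) and (f) can be showed by the same argumentation as (b) and (c)'' together with the enabledness conclusion from stability.
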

\begin{proof}
We show that
$R = \{(M,\Mdelta) \ | \ \Bsum{c \in \theta}M(p)(c) = \Mdelta(p)(\theta) \text{ for all } p \in P \text{ and all } \theta \in \delta (p) \}$ is a bisimulation relation.
We first notice that $(M_0,\Mdelta_0) \in R$ by Item \ref{item:initialMarking} in Definition \ref{def:partitionedCPN}.

\medskip
Assume that $(M,\Mdelta) \in R$ and $t \in T$ such that $M \xrightarrow{t} M'$ under binding $b \in B(t)$, we want to show that $\Mdelta \xrightarrow{t} \Mdeltaprime$ under binding $[b]_t \in B(t)$ such that $(M',\Mdeltaprime) \in R$. As such, we need to prove the following:
\begin{align*}
    \bm{(a)} \quad &W^\delta((p,t),[b]_t) \subseteq \Mdelta(p) \text{ for all } p \in P \\
    \bm{(b)} \quad &W_I^\delta(p,t) > |\Mdelta (p)| \text{ for all } p \in P\\
    \bm{(c)} \quad &(M',\Mdeltaprime) \in R \text{ where } \Mdeltaprime(p) = (\Mdelta (p) \setminus W^\delta((p,t),[b]_t)) \uplus W^\delta((t,p),[b]_t)\\ &\text{for all } p \in P \ .
\end{align*}

$\bm{(a)}$ We start by showing $W^\delta((p,t),[b]_t) \subseteq \Mdelta(p)$ for all $p \in P$. First,
because $(M,\Mdelta) \in R$, we know that
\begin{equation} \label{eq:markingsInRelation}
  \Bsum{c \in \theta}M(p)(c) = \Mdelta(p)(\theta) \text{ for all } p \in P \text{ and all } \theta \in \delta (p).
\end{equation}
Since $W((p,t),b) \subseteq M(p)$ we know for all $c \in C(p)$ that $ W((p,t),b))(c) \leq M(p)(c)$
which implies that
\begin{equation} \label{eq:inputArcsTheorem}
   \Bsum{c \in \theta}W((p,t),b))(c) \leq \Bsum{c \in \theta}M(p)(c)
\end{equation}
for all $\theta \in \delta (p)$.
We now show that $W^\delta((p,t),[b]_t) \subseteq \Mdelta(p)$ for all $p \in P$
i.e. $W^\delta((p,t),[b]_t)(\theta) \leq \Mdelta(p)(\theta)$ for all $\theta \in \delta (p)$:

\begin{center}
\begin{tabularx}{\textwidth}{lXr}
    $W^\delta((p,t),[b]_t)(\theta)$ & = & substitute by Def. \ref{def:partitionedCPN} Item \ref{item:arcs} \\
     $\Bsum{c \in \theta}W((p,t),b))(c)$& $\leq$ &by Equation (\ref{eq:inputArcsTheorem}) \\
     $\Bsum{c \in \theta}M(p)(c)$ & = & by Equation (\ref{eq:markingsInRelation}) \\
     $\Mdelta(p)(\theta)$ \ . & &  \\
\end{tabularx}
\end{center}

$\bm{(b)}$ Next we show $W_I^\delta((p,t) > |\Mdelta(p)|$. We know that
\begin{equation} \label{Eq:inhibArcs}
    W_I(p,t) > |M(p)|
\end{equation}
by definition of CPN semantics since $M \xrightarrow{t} M'$. We then observe that for all $p \in P$, it holds
\begin{center}
\begin{tabularx}{\textwidth}{lXr}
    $W^\delta_I(p,t)$ & = &substitute by Def. \ref{def:partitionedCPN} Item \ref{item:inhibarcs}  \\
     $W_I(p,t)$& > &by Equation (\ref{Eq:inhibArcs})\\
     $|M(p)|$ & = & multiset definition\\
     $\Bsum{c \in C(p)}M(p)(c)$ &= & since $(\Bcup{\theta \in \delta(p)} \theta) = C(p)$\\
     $\Bsum{\theta \in \delta (p)}\Bsum{c \in \theta}M(p)(c)$ & = & by Equation (\ref{eq:markingsInRelation})  \\
     $\Bsum{\theta \in \delta(p)}\Mdelta(p)(\theta)$ & = & multiset definition\\
     $|\Mdelta(p)|$ \ .& &\\
\end{tabularx}
\end{center}

$\bm{(c)}$ Lastly, we show that $(M',\Mdeltaprime) \in R$. Assume $p \in P$, $b \in B(t)$ and equivalence class $[b]_t$. We know that $M'(p) = (M(p) \setminus W((p,t),b)) \uplus W((t,p),b)$ and $\Mdeltaprime(p) = (\Mdelta(p) \setminus W^\delta((p,t),[b]_t)) \uplus W^\delta((t,p),[b]_t)$ and we need to show that $\Bsum{c \in \theta}M'(p)(c) = \Mdeltaprime(p)(\theta)$ for all $\theta \in \delta(p)$:

\begin{center}
\begin{tabularx}{\textwidth}{lXr}
    $\Bsum{c\in \theta} M'(p)(c)$ & = & by def. of CPN semantics \\
     $\Bsum{c \in \theta} (M(p) \setminus W((p,t),b) \uplus W((t,p),b))(c)$ & = & \specialcell{substitute by multiset definitions\\ and by enabledness of $t$}\\
     \specialcell{$\Bsum{c \in \theta}M(p)(c) - \Bsum{c \in \theta}W((p,t),b)(c)$\\$ + \Bsum{c \in \theta}W((t,p),b)(c)$} & = &substitute by Def. \ref{def:partitionedCPN} Item \ref{item:arcs}\\
     \specialcell{$\Bsum{c \in \theta}M(p)(c) - W^\delta((p,t),[b]_t)(\theta)$\\ $+ W^\delta((t,p),[b]_t)(\theta) $} &= &since $(M,\Mdelta) \in R$\\
     \specialcell{$\Mdelta(p)(\theta) - W^\delta((p,t),[b]_t)(\theta)$ \\$+ W^\delta((t,p),[b]_t)(\theta) $} & = & by definition of CPN semantics \\
     $\Mdeltaprime(p)(\theta)$ \ . & &
\end{tabularx}
\end{center}

We then have to show that the same is the case for the opposite direction such that assume $(M,\Mdelta) \in R$ and $t \in T$ such that $\Mdelta \xrightarrow{t} \Mdeltaprime$, we want to show that $M \xrightarrow{t} M'$
for some $b \in B(t)$ such that $(M',\Mdeltaprime) \in R$. As such, we want to show that:
\begin{align*}
    \bm{(d)} \quad &W((p,t),b) \subseteq M(p) \text{ for all } p \in P \\
    \bm{(e)} \quad &W_I(p,t) > |M(p)| \text{ for all } p \in P\\
    \bm{(f)} \quad &(M',\Mdeltaprime) \in R \text{ where } M'(p) = (M(p) \setminus W((p,t),b)) \uplus W((t,p),b) \\ &\text{for all } p \in P \ .
\end{align*}

We first notice that $\bm{(e)}$ and $\bm{(f)}$ can be showed by the same argumentation as $\bm{(b)}$ and $\bm{(c)}$.
For the case $\bm{(d)}$, we show that $W((p,t),b) \subseteq M(p) \text{ for all } p \in P$. From $\bm{(a)}$
we know that $W^\delta((p,t),[b]_t) \subseteq \Mdelta(p)$ which implies $\Bsum{c \in \theta}W((p,t),b))(c) \leq \Bsum{c \in \theta}M(p)(c)$ for all $\theta \in \delta(p)$ and $p \in P$.

\medskip
Hence observe that there exists a marking $M_1$ such that $\Bsum{c \in \theta}M_1(p)(c) = M^\delta(p)(\theta)$ and
where $\Bsum{c \in \theta}W((p,t),b)(c) \leq \Bsum{c \in \theta}M_1(p)(c)$ for all $\theta \in \delta(p)$ and $p \in P$. Clearly $t$ is enabled in $M_1$ since $W((p,t),b) \subseteq M_1(p)$ for all $p \in P$ by the multiset definition of $\subseteq$ and we know that the inhibitor arcs do not inhibit the transition by $\bm{(e)}$.

\medskip
We then want to show that $M_1 \dequiv M$, i.e. $\Bsum{c \in \theta}M_1(p)(c) = \Bsum{c \in \theta}M(p)(c)$ for all $\theta \in \delta(p)$ and $p \in P$. Since $(M,\Mdelta) \in R$ we know that $\Bsum{c \in \theta}M(p)(c) = \Mdelta(p)(\theta) = \Bsum{c \in \theta}M_1(p)(c)$ for all $\theta \in \delta(p)$ and $p \in P$ and thus $M_1 \dequiv M$. Since $\delta$ is stable we know that $t$ is enabled in $M$.

\medskip
Thus we know that the opposite direction also holds meaning that $R$ is a bisimulation.
\end{proof}

\subsection{Computing stable partitions}

Our main challenge is how to efficiently compute a stable partition in order
to apply the quotienting technique.
To do so, we first define a partition refinement. 

\begin{definition} 
Given two partitions $\delta$ and $\delta'$ we write
$ \delta \geq \delta'$ if for all  $p \in P$ and all $\theta' \in \delta'(p)$ there exists
$\theta \in \delta (p)$ such that $\theta' \subseteq \theta$.
Additionally, we write $\delta > \delta'$ if $\delta \geq \delta'$ and $\delta' \neq \delta$.
\end{definition}

Note that for any finite CPN as assumed in Remark~\ref{remark:Finite},
 the refinement relation $>$
is well-founded as for any $\delta > \delta'$ the partition $\delta'$ has
strictly more equivalence classes for at least one place $p \in P$. 
We now define also the union of two partitions as the smallest partition that has both
of the partitions as refinements.

\begin{definition} 
Given two partitions $\delta_1$, $\delta_2$ and $p \in P$, let $\xleftrightarrow{}$
be a relation over $\delta_1(p) \cup \delta_2(p)$ such that
$ \theta \xleftrightarrow{} \theta' \textit{ iff } \theta \cap \theta' \neq \emptyset$  where $\theta, \theta' \in \delta_1(p) \cup \delta_2(p)$.
Let $\xleftrightarrow{}^*$ be the reflexive, transitive closure of $\xleftrightarrow{}$
and let $[\theta] \defeq \Bcup{\theta' \in \delta_1(p) \cup \delta_2(p), \theta \xleftrightarrow{}^* \theta'} \theta'$ where $\theta \in \delta_1(p) \cup \delta_2(p)$. Finally, we define
the partition union operator $\sqcup$
by
$(\delta_1 \sqcup \delta_2)(p) = \Bcup{\theta \in \delta_1(p) \cup \delta_2(p)} \{[\theta]\} \text{ for all } p \in P$.
\end{definition}

For example, assume some place $p$ such that $C(p) = \{([1,5])\}$ and partitions $\delta_1$ and $\delta_2$ such that $\delta_1(p) = \{([1,2]),([3,4]),([5])\}$ and $\delta_2(p) = \{([1]),([2,3]),([4]),([5])\}$ then $(\delta_1 \sqcup \delta_2)(p) = \{([1,4]),([5])\}$.

\begin{lemma} \label{lemma:SmallestUnion} \label{lemma:PartitionUnion}  
Let $\delta_1$ and $\delta_2$ be two partitions.
Then
(i) $\delta_1 \sqcup \delta_2 \geq \delta_1$ and $\delta_1 \sqcup \delta_2 \geq \delta_2$, and
(ii) if $\delta_1$ and $\delta_2$ are stable partitions then so is $\delta_1 \sqcup \delta_2$.
\end{lemma}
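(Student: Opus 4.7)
This is a direct unfolding of definitions. Fix $p \in P$ and $\theta \in \delta_i(p)$ with $i \in \{1,2\}$. Since $\xleftrightarrow{}^*$ is reflexive, $\theta \xleftrightarrow{}^* \theta$ and so $\theta \subseteq [\theta] \in (\delta_1 \sqcup \delta_2)(p)$, which by the definition of $\geq$ yields $\delta_1 \sqcup \delta_2 \geq \delta_i$.

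\textbf{Part (ii).} The plan is to identify $\equivparam{\delta_1 \sqcup \delta_2}$ with the transitive closure $R^*$ of the relation $R := \dequivparam{1} \cup \dequivparam{2}$ on markings. Granted this identification, stability of $\delta_1 \sqcup \delta_2$ follows from a soft bisimulation-closure argument: stability of $\delta_1$ and $\delta_2$ says $\dequivparam{1}$ and $\dequivparam{2}$ are both bisimulations, the union of two bisimulations is a bisimulation (pick whichever component the pair lies in and apply its transfer property), and a straightforward chain induction, using that each $\dequivparam{i}$ is reflexive, shows that the transitive closure of a bisimulation is again a bisimulation.

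For the identification, the inclusion $R^* \subseteq \equivparam{\delta_1 \sqcup \delta_2}$ is easy: part (i) implies $\dequivparam{i} \subseteq \equivparam{\delta_1 \sqcup \delta_2}$ since a coarser partition identifies at least as many markings, and the right-hand side is an equivalence relation, hence transitively closed. The reverse inclusion $\equivparam{\delta_1 \sqcup \delta_2} \subseteq R^*$ is where the combinatorial content lies. Given $M \equivparam{\delta_1 \sqcup \delta_2} M'$, for each place $p$ and each merged class $[\theta] \in (\delta_1 \sqcup \delta_2)(p)$ the equality
\begin{equation*}
\Bsum{c \in [\theta]} M(p)(c) = \Bsum{c \in [\theta]} M'(p)(c)
\end{equation*}
says that $M$ and $M'$ differ inside $[\theta]$ only by a redistribution of a fixed number of tokens over the colors in $[\theta]$.

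\textbf{The main obstacle} is realising this redistribution as a finite $R$-chain. For each merged class $[\theta]$ I would build an undirected graph $H_{[\theta]}$ on vertex set $[\theta]$ whose edges join two colors $c, c'$ whenever $\{c, c'\} \subseteq \theta'$ for some $\theta' \in \delta_1(p) \cup \delta_2(p)$. The definition of $[\theta]$ as a $\xleftrightarrow{}^*$-component makes $H_{[\theta]}$ connected, so a single token on color $c$ can be walked to any other color in $[\theta]$ along a path in $H_{[\theta]}$; moving one token from $c_i$ to $c_{i+1}$ along one edge of that path alters the marking only inside a single class $\theta' \in \delta_j(p)$ and is therefore a single $\dequivparam{j}$-step. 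Performing such relocations sequentially, merged class by merged class and place by place, produces the desired $R$-chain from $M$ to $M'$. The bookkeeping that each elementary relocation preserves the totals on the \emph{other} classes of $\delta_j$ (and on the merged classes already adjusted) is routine once one restricts attention to one edge at a time, and this graph-connectivity argument is the step where the proof actually has to do its real work.
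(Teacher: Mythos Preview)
Your proposal is correct and follows essentially the same route as the paper: reduce stability of $\delta_1 \sqcup \delta_2$ to showing that any pair $M \equivparam{\delta_1 \sqcup \delta_2} M'$ is connected by a finite chain of $\dequivparam{1}$- and $\dequivparam{2}$-steps, and then invoke closure of bisimulations under union and transitive closure. The only cosmetic difference is in how the chain is built: the paper concentrates all tokens of a merged class $[\theta]$ at a single colour $c_i$ and walks that concentrated mass along the overlap chain $\theta_1,\theta_2,\ldots$, whereas you move one token at a time along edges of your graph $H_{[\theta]}$; both realisations rely on the same connectivity fact about $[\theta]$ and either one works.
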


\begin{proof}
For the first part of the claim,
from definition of partition union, we see that $[\theta]$ is the union of any $\theta'$ that overlaps with $\theta$ and $\delta_1 \sqcup \delta_2$ just collects all such unions for every $\theta$. As such, it is trivial that for any $\theta \in \delta_1 (p)$ there exists $\theta' \in \delta_1 (p) \sqcup \delta_2 (p)$ such that $\theta \subseteq \theta'$ for all $p \in P$, in other words $ \delta_1 \sqcup \delta_2 \geq \delta_1$. The same is the case for $\delta_2$.

\medskip
For the second part of the claim,
let $\delta = \delta_1 \sqcup \delta_2$. Assume $M$ and $M'$ such that $M \dequiv M'$ i.e. for all $p \in P$ and all $[\theta ]\in \delta (p)$ it holds that $\Bsum{c \in [\theta]} M(p)(c) = \Bsum{c \in [\theta]} M'(p)(c)$.
We want to show that $\delta$ is stable, i.e that $\dequivparam{}$ is a bisimulation relation,
while assuming that $\dequivparam{1}$ and $\dequivparam{2}$ are bisimulation relations.
Assume a fixed $p \in P$.
From definition of partition union, we get that for every fixed $[\theta] \in \delta (p)$
and for all $c,c' \in [\theta]$ there exist $c_1,...,c_k \in [\theta]$ such that
$c,c_1 \in \theta_1$, $c_1,c_2 \in \theta_2$, $c_2,c_3 \in \theta_3$, \ldots,
$c_k,c' \in \theta_k$, where $\theta_i \in \delta_1 (p) \cup \delta_2 (p)$
for all $i, 1 \leq i \leq k$.
Let us define markings $M_i$, for $1 \leq i \leq k$,
such that $M_i(p)(c_i) = \Bsum{c \in [\theta]} M(p)(c)$ and
$M_i(p)(c) = 0$ if $c \neq c_i$ and $c \in [\theta]$, otherwise
$M_i(p)(c) = M(p)(c)$. In other words, in order to obtain $M_i$, we replace in $M$
all colors in the equivalence class $[\theta]$ with the color $c_i$ and obtain
the chain
$M(p) \dequivparam{j_1} M_1(p) \dequivparam{j_2} M_2(p) \dequivparam{j_3} ... \dequivparam{j_k} M_k(p)$
where $j_i \in \{1,2\}$, i.e. $M_i$ and $M_{i+1}$ are related either by
$\dequivparam{1}$ or $\dequivparam{2}$, both of them being bisimulation relations. We
can also observe that $M(p) \dequiv M_k(p)$.
We can then repeat this process for
all other equivalence classes in $\delta(p)$ in order to conclude that
$M(p) \dequiv M'(p)$.
The same process can then be applied to all $p \in P$, implying that
$M \dequiv M'$. 
Hence there is a chain of markings as above starting from $M$ and ending in $M'$.
Since $\delta_1$ and $\delta_2$ are stable, meaning that both $\dequivparam{1}$ and $\dequivparam{2}$
are bisimulation relations, this implies that every transition from $M$ can be
(by transitivity) matched by a transition from $M'$ and vice versa,
implying that $\dequivparam{}$ is a bisimulation relation
and $\delta$ is so stable.
\end{proof}

The lemma above implies the existence of a unique maximum stable partition.

\begin{theorem} 
There is a unique maximum stable partition $\delta$ 
such that $\delta \geq \delta'$
for all stable partitions~$\delta'$. 
\end{theorem}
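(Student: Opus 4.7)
The plan is to leverage Lemma~\ref{lemma:PartitionUnion} by aggregating every stable partition of $\mathcal{N}$ into a single one via iterated $\sqcup$, and then to deduce uniqueness from antisymmetry of $\geq$.

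First I would observe that the collection of stable partitions is both nonempty and finite. It is nonempty because the finest partition $\delta_{\min}$, defined by $\delta_{\min}(p) = \{\{c\} \mid c \in C(p)\}$ for every $p \in P$, induces on markings exactly the identity relation, and equality is trivially a bisimulation, so $\delta_{\min}$ is stable. It is finite by Remark~\ref{remark:Finite}: since each $C(p)$ is finite, there are only finitely many partitions of $C(p)$, hence only finitely many partitions of the whole net.

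Let $\{\delta_1,\ldots,\delta_n\}$ enumerate all stable partitions and set
\[
  \delta^{*} \defeq \delta_1 \sqcup \delta_2 \sqcup \cdots \sqcup \delta_n.
\]
A straightforward induction on $n$ using part~(ii) of Lemma~\ref{lemma:PartitionUnion} shows that $\delta^{*}$ is stable; a parallel induction using part~(i) shows that $\delta^{*} \geq \delta_i$ for every $i$. Thus $\delta^{*}$ is a stable partition that is $\geq$ every stable partition, which is precisely the maximality property claimed.

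For uniqueness, suppose $\delta$ and $\delta'$ are both maximum stable partitions. Then by maximality $\delta \geq \delta'$ and $\delta' \geq \delta$, and it suffices to argue that $\geq$ is antisymmetric. Fix any $p \in P$ and any (nonempty) class $\theta' \in \delta'(p)$. By $\delta \geq \delta'$ there exists $\theta \in \delta(p)$ with $\theta' \subseteq \theta$, and by $\delta' \geq \delta$ there exists $\theta'' \in \delta'(p)$ with $\theta \subseteq \theta''$. Then $\theta' \subseteq \theta''$ with $\theta' \neq \emptyset$, so $\theta'$ and $\theta''$ are overlapping classes of $\delta'(p)$ and must coincide by disjointness, forcing $\theta = \theta'$. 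Hence every class of $\delta'(p)$ is a class of $\delta(p)$, and symmetrically every class of $\delta(p)$ is a class of $\delta'(p)$, giving $\delta = \delta'$. The only step requiring genuine care is this antisymmetry argument; everything else is bookkeeping on top of Lemma~\ref{lemma:PartitionUnion}.
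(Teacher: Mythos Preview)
Your argument is correct. It differs from the paper's proof in that the paper argues purely by contradiction: it assumes two distinct maximum stable partitions $\delta_1 \neq \delta_2$, forms $\delta_1 \sqcup \delta_2$, and observes via Lemma~\ref{lemma:PartitionUnion} that this is a stable partition $\geq$ both, contradicting maximality. The paper does not separately establish existence, nor does it make the antisymmetry of $\geq$ explicit. Your route is more constructive: you exhibit the maximum as the $\sqcup$-aggregate of the (finite, nonempty) family of all stable partitions, and you spell out the antisymmetry argument needed for uniqueness. What your approach buys is a self-contained existence proof and an explicit description of the maximum; what the paper's approach buys is brevity, at the cost of leaving existence and antisymmetry implicit. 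Both rest on the same Lemma~\ref{lemma:PartitionUnion}, so the underlying mechanism is identical.
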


\begin{proof}
We prove this by contradiction. Assume two maximum stable partitions $\delta_1$ and $\delta_2$ where $\delta_1 \neq \delta_2$.
By Lemma \ref{lemma:PartitionUnion} (ii) we know that $\delta_1 \sqcup \delta_2$ is stable and by Lemma \ref{lemma:SmallestUnion} (i) we know that $\delta_1 \sqcup \delta_2 \geq \delta_1$ and $\delta_1 \sqcup \delta_2 \geq \delta_2$. Thus $\delta_1$ and $\delta_2$ cannot both be maximum stable partitions. 
\end{proof}

In order to provide an algorithm for computing a stable partition, we
define the maximum arc size for a given CPN $\mathcal{N}$ as the function
$\textit{max}(\mathcal{N}) = \max_{p \in P, t\in T,b \in \mathbb{B}}(|W((p,t),b)|, |W((t,p),b)|)$.
The set of all markings smaller than the \textit{max} arc size over $\mathcal{N}$ is defined by
$\mathbb{M}^{bounded}(\mathcal{N}) = \{ M \in \mathbb{M}(\mathcal{N}) \ | \ |M(p)| \leq \textit{max}(\mathcal{N}) \text{ for all } p \in P \}$.
As such, $\mathbb{M}^{bounded}(\mathcal{N})$ is a finite set
of all bounded markings of $\mathcal{N}$ with cardinality less than or equal
to $\textit{max}(\mathcal{N})$.

\medskip
\hspace*{-1mm}In order to compute stable partitions we need to show some properties for markings in $\mathbb{M}^{bounded}(\mathcal{N})$. The following lemma shows that if there are two non-bisimilar markings that break
the fact that the relation $\dequiv$ is a bisimulation, then there are also two markings
that are bounded and that also break the bisimulation property.

\begin{lemma}\label{lemma:BoundedMarkings}
Let $\mathcal{N}$ be a CPN and $\delta$ a partition. Then for all $t \in T$ it holds that
\begin{enumerate}
    \item[$\bm{(a)}$] if there exist $M_1,M_2 \in \mathbb{M}(\mathcal{N})$ such that $M_1 \dequiv M_2$, $M_1 \xrightarrow{t} \text{ and } \ M_2 \centernot{\xrightarrow{t}}$ then there exist $M_3,M_4 \in \mathbb{M}^{bounded}$ such that $M_3 \dequiv M_4$, $M_3 \xrightarrow{t} \text{ and } \ M_4 \centernot{\xrightarrow{t}}$, and
    \item[$\bm{(b)}$] if there exist $M_1,M_2 \in \mathbb{M}(\mathcal{N})$ where $M_1 \dequiv M_2$ and there exists $M_1' \in \mathbb{M}(\mathcal{N})$ such that $M_1 \xrightarrow{t} M_1'$ and for all $M_2' \in \mathbb{M}(\mathcal{N})$ where $M_2 \xrightarrow{t} M_2'$ it holds that $M_1' \centernot{\dequiv} M_2'$ then there exists $M_3,M_4 \in \mathbb{M}^{bounded}(\mathcal{N})$ where $M_3 \dequiv M_4$ and there exists $M_3' \in \mathbb{M}^{bounded}(\mathcal{N})$ such that $M_3 \xrightarrow{t} M_3'$ and for all $M_4' \in \mathbb{M}^{bounded}(\mathcal{N})$ where $M_4 \xrightarrow{t} M_4'$ it holds that $M_3' \centernot{\dequiv} M_4'$.
\end{enumerate}
\end{lemma}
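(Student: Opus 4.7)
The plan is to prove both (a) and (b) via a single witness construction. Given $M_1, M_2$ as in the hypothesis and a binding $b \in B(t)$ firing $t$ in $M_1$, let
\[
M_3(p) \defeq W((p,t), b) \qquad \text{for every } p \in P.
\]
Then $|M_3(p)| \leq \textit{max}(\mathcal{N})$, so $M_3 \in \mathbb{M}^{bounded}(\mathcal{N})$, and $t$ is enabled in $M_3$ under $b$: arcs are immediate, and inhibitors hold because $|M_3(p)| \leq |M_1(p)| < W_I(p,t)$. For $M_4$, pick any sub-multiset of $M_2$ matching $M_3$ per equivalence class; this is possible because $M_1 \dequiv M_2$ together with $W((p,t),b) \subseteq M_1(p)$ gives $\Bsum{c \in \theta} M_2(p)(c) = \Bsum{c \in \theta} M_1(p)(c) \geq \Bsum{c \in \theta} M_3(p)(c)$ for every $\theta \in \delta(p)$. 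By construction $M_3 \dequiv M_4$, $M_4 \subseteq M_2$, and $M_4 \in \mathbb{M}^{bounded}(\mathcal{N})$.

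For (a), I would use the observation that $|M_1(p)| = |M_2(p)|$ (since $\dequiv$ preserves total token counts per place), whence the inhibitor conditions for $t$ coincide in $M_1$ and $M_2$. The obstacle to firing $t$ in $M_2$ must therefore come from preset arcs: for every $b' \in B(t)$ there is some $(p,c)$ with $M_2(p)(c) < W((p,t),b')(c)$. This strict inequality is inherited by $M_4 \subseteq M_2$, and so $M_4 \centernot{\xrightarrow{t}}$.

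For (b), set $M_3' \defeq W((t,p),b) = (M_3 \setminus W((p,t),b)) \uplus W((t,p),b)$, which is bounded and is the $t$-successor of $M_3$ via $b$. Suppose $M_4 \xrightarrow{t} M_4'$ via some binding $b''$. Since $M_4 \subseteq M_2$ the arc condition $W((p,t),b'') \subseteq M_2(p)$ holds, and since $|M_2(p)| = |M_1(p)| < W_I(p,t)$ the inhibitor condition also holds, so $M_2$ fires $t$ under $b''$ producing some $M_2'$; by hypothesis $M_1' \centernot{\dequiv} M_2'$. A direct per-class-count computation, using $\Bsum{c \in \theta} M_3(p)(c) = \Bsum{c \in \theta} M_4(p)(c)$ and $\Bsum{c \in \theta} M_1(p)(c) = \Bsum{c \in \theta} M_2(p)(c)$, then yields
\[
\Bsum{c \in \theta} M_3'(p)(c) - \Bsum{c \in \theta} M_4'(p)(c) = \Bsum{c \in \theta} M_1'(p)(c) - \Bsum{c \in \theta} M_2'(p)(c)
\]
for every $p$ and $\theta$, so $M_3' \dequiv M_4'$ would entail $M_1' \dequiv M_2'$, a contradiction. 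The main subtlety will be ensuring that $M_4$ cannot enable $t$ under a binding unavailable to $M_2$, which is precisely what $M_4 \subseteq M_2$ combined with the $\dequiv$-invariance of total per-place token counts is designed to guarantee.
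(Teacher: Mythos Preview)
Your argument is correct and follows essentially the same approach as the paper: both set $M_3(p)=W((p,t),b)$ for the enabling binding $b$ and pick $M_4$ as a sub-multiset of $M_2$ that is $\delta$-equivalent to $M_3$, then use $M_4\subseteq M_2$ together with $|M_1(p)|=|M_2(p)|$ to transfer (non-)enabledness and to lift the contradiction in~(b) back to $M_1',M_2'$. The only cosmetic difference is that the paper phrases the lift in~(b) via the residual decomposition $M_i=M_{i+2}\uplus\overline{M}_{i+2}$ with $\overline{M}_3\dequiv\overline{M}_4$, whereas you carry out the equivalent per-class arithmetic directly.
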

\begin{proof}
Recall that $\textit{max}(\mathcal{N})$ is defined as the largest cardinality of all arc multisets in $\mathcal{N}$, i.e. $|W((p,t),b)| \leq \textit{max}(\mathcal{N})$ for all $p \in P$ and $b \in B(t)$.
\begin{enumerate}
    \item[$\bm{(a)}$] Let $M_1,M_2 \in \mathbb{M}(\mathcal{N})$ such that $M_1 \dequiv M_2$, $M_1 \xrightarrow{t} \text{ and } \ M_2 \centernot{\xrightarrow{t}}$. We construct a marking $M_3$ such that $M_3(p) = W((p,t),b)$ for all $p \in P$ and some $b \in B(t)$.  It clearly follows that $|M_3(p)| \leq \textit{max}(\mathcal{N})$ for all
$p \in P$, hence $M_3 \in \mathbb{M}^{bounded}(\mathcal{N})$. We see that $M_1(p) = M_3(p) \uplus \overline{M}_3(p)$ since $M_3(p) \subseteq M_1(p)$ for all $p \in P$ where $\overline{M}_3(p)$ describes the remaining tokens in $M_1(p)$ that are not in $M_3(p)$.
    We know that no inhibitor arc can be the reason that $M_2$ is not enabled because $M_1 \dequiv M_2$. We also know that $M_2(p) \centernot{\subseteq} W((p,t),b)$ for at least one $p \in P$ for all $b \in B(t)$.

    We pick a marking $M_4$ where $M_4 \dequiv M_3$, $M_4(p) \centernot{\subseteq} W((p,t),b)$ for at least one $p \in P$ and $M_2(p) = M_4(p) \uplus \overline{M}_4(p)$ for all $p \in P$ such that $\overline{M}_4 \dequiv \overline{M}_3$. Notice that $M_4 \in \mathbb{M}^{bounded}(\mathcal{N})$. We know that $M_4$ exists because $M_2 \dequiv M_1$ and $M_2(p) \centernot{\subseteq} W((p,t),b)$ meaning that $M_4(p) \uplus \overline{M}_4(p) \centernot{\subseteq} W((p,t),b)$ and thus $M_4(p) \centernot{\subseteq} W((p,t),b)$ for some $p \in P$.
    \item[$\bm{(b)}$] Let $M_1,M_2 \in \mathbb{M}(\mathcal{N})$ such that $M_1 \dequiv M_2$ and $M_1' \in \mathbb{M}(\mathcal{N})$ such that $M_1 \xrightarrow{t} M_1'$ while for all $M_2' \in \mathbb{M}(\mathcal{N})$ where $M_2 \xrightarrow{t} M_2'$ we know that $M_1' \centernot{\dequiv} M_2'$. We construct a marking $M_3$ exactly as before such that $M_3(p) = W((p,t),b)$ for all $p \in P$ and some $b \in B(t)$ and $M_1(p) = M_3(p) \uplus \overline{M}_3(p)$ for all $p \in P$.

    We then pick a marking $M_4$ such that $M_4 \dequiv M_3$
and $M_2(p) = M_4(p) \uplus \overline{M}_4(p)$ for all $p \in P$ and $b \in B(t)$ such that $\overline{M}_4 \dequiv \overline{M}_3$. We know that $M_4 \in \mathbb{M}^{bounded}(\mathcal{N})$ since $M_4 \dequiv M_3$.
    Let $M_3' \in \mathbb{M}^{bounded}(\mathcal{N})$ such that $M_3 \xrightarrow{t} M_3'$, which is possible because $M_3(p) \subseteq M_1(p)$ for all $p \in P$ such that no inhibitor arc can inhibit $M_3$. For the sake of contradiction now assume there exists a marking $M_4' \in \mathbb{M}^{bounded}(\mathcal{N})$ such that $M_4 \xrightarrow{t} M_4'$ and $M_3' \dequiv M_4'$. Then notice that we can let $M_1'(p) = M_3'(p) \uplus \overline{M}_3(p)$ where $M_1 \xrightarrow{t} M_1'$ since $M_3(p) \subseteq M_1(p)$ for all $p \in P$ and let $M_2'(p) = M_4'(p) \uplus \overline{M_4}(p)$ where $M_2 \xrightarrow{t} M_2'$ since $M_4(p) \subseteq M_2(p)$ for all $p \in P$. But since $M_3'(p) \dequiv M_4'(p)$ and $\overline{M}_3(p) \dequiv \overline{M}_4(p)$ it means that $M_3'(p) \uplus \overline{M}_3(p) \dequiv M_4'(p) \uplus \overline{M_4}(p)$ for all $p \in P$, i.e. $M_1' \dequiv M_2'$. However, this contradicts the conditions of $\bm{(b)}$, and as such $M_3' \dequiv M_4'$ cannot hold.
\end{enumerate}

\vspace*{-7mm}
\end{proof}

\begin{algorithm}[!t]
\SetAlgoLined
\textbf{Input}: $\mathcal{N} = \CPN{}$\\
\textbf{Output}: Stable partition $\delta$\\
let $\delta(p) := \{C(p)\}$ for all $p \in P$\\
\For{$t \in T$}{ 
    \While{$\exists M_1,M_2 \in \mathbb{M}^{bounded}(\mathcal{N}).M_1 \dequiv M_2 \wedge M_1 \centernot{\xrightarrow{t}} \wedge M_2 \xrightarrow{t} $}
        {
            pick $\delta' < \delta$ such that $M_1 \centernot{\dequivprime} M_2$ \\   
           $\delta := \delta'$\\
        }
}
let $\mathcal{Q} := P$ \ //Waiting list of places\\
\While{$\mathcal{Q} \neq \emptyset$}{
    let $p \in \mathcal{Q}$; $\mathcal{Q} := \mathcal{Q} \setminus \{p\}$\\

    \For{$t \in \preset{p}$}{
        \If{$\exists M_1, M_2 \in \mathbb{M}^{bounded}(\mathcal{N}).M_1 \dequiv M_2.\exists M_1' \in \mathbb{M}^{bounded}(\mathcal{N}). M_1 \xrightarrow{t} M_1' \wedge \forall M_2' \in \mathbb{M}^{bounded}(\mathcal{N}).M_2 \xrightarrow{t} M_2' \wedge M_1'(p) \centernot{\dequiv} M_2'(p)$}
        {
            pick $\delta' < \delta$ such that $M_1\centernot{\dequivprime} M_2$ 
           and $\delta'(p') = \delta(p')$ for all $p' \in P \setminus \preset{t}$ \\
            $\mathcal{Q} := \mathcal{Q} \cup \{p' \ | \ \delta'(p') \neq \delta(p')\}$ \\
            $\delta := \delta'$
        }

    }
}
\Return $\delta$
\caption{$\stabilize(\mathcal{N})$}
\label{Algo:StablePartition}
\end{algorithm}

Algorithm \ref{Algo:StablePartition} now gives a procedure for computing a stable partition
over a given CPN. It starts with an initial partition where every color in the color domain
is in the same equivalence class for each place. The algorithm is then split into two parts.
The first part from line 4 to 9 creates an initial partition applying the guard restrictions
to the input places of the transitions. The second part from line 11 to 20 back propagates
the guard restrictions throughout the net such that only colors that behave the same
are quotiented together. Depending on the choices in lines 6 and 15, the algorithm may
return the maximum stable partition, however in the practical implementation this is not
guaranteed due to an approximation of the guard/arc expression analysis.

\begin{theorem}
Given a CPN $\mathcal{N}$, the algorithm $\stabilize(\mathcal{N})$ terminates
and returns a stable partition of $\mathcal{N}$.
\end{theorem}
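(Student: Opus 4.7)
The plan is to establish termination and stability separately, using the well-foundedness of the refinement order $>$ on partitions for termination, and Lemma~\ref{lemma:BoundedMarkings} to lift the bounded-markings check performed by the algorithm to a genuine bisimulation property.

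For termination, I would first observe that every assignment $\delta := \delta'$ in the algorithm strictly refines $\delta$, i.e.\ $\delta > \delta'$. Since $\mathcal{N}$ is finite, every place has a finite color type, so $>$ is well-founded: any descending chain of partitions has length at most $\sum_{p \in P}(|C(p)|-1)$. The first \textbf{for} loop (lines 4--9) iterates over the finite set $T$, and for each $t$ its inner \textbf{while} loop can run only finitely many times before no such $M_1,M_2$ exist (as $\delta$ strictly decreases each iteration and reaches the discrete partition in the worst case). For the second phase (lines 11--20), I would introduce a potential function, namely the total number of equivalence classes $\sum_{p \in P}|\delta(p)|$: every refinement inside the \textbf{if} block strictly increases this potential, and a place is re-added to $\mathcal{Q}$ only when such a refinement occurs. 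Since the potential is bounded above by $\sum_{p \in P}|C(p)|$, only finitely many refinements occur overall, and hence only finitely many re-additions to $\mathcal{Q}$; combined with the fact that each iteration of the outer \textbf{while} removes one place, termination follows.

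For correctness, I would show that upon termination, no pair of markings witnesses a failure of the bisimulation property for $\dequiv$. First, after the loop of lines 4--9, the negation of the loop guard gives us that for every $t \in T$ there are no $M_1,M_2 \in \mathbb{M}^{bounded}(\mathcal{N})$ with $M_1 \dequiv M_2$, $M_1 \xrightarrow{t}$ and $M_2 \centernot{\xrightarrow{t}}$. Because the second phase only further refines $\delta$, this property is preserved (refining $\delta$ can only shrink the relation $\dequiv$). Second, after the loop of lines 11--20, for every place $p$ and every $t \in \preset{p}$, no pair of bounded markings $M_1 \dequiv M_2$ admits a firing $M_1 \xrightarrow{t} M_1'$ such that every $M_2'$ with $M_2 \xrightarrow{t} M_2'$ satisfies $M_1'(p) \centernot{\dequiv} M_2'(p)$. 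Since places $p \notin \preset{t}$ are unaffected by firing $t$, this immediately gives the full condition $M_1' \centernot{\dequiv} M_2'$ across all places.

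Finally, to conclude stability I would apply Lemma~\ref{lemma:BoundedMarkings} contrapositively. Part (a) of the lemma tells us that if some pair of general markings $M_1 \dequiv M_2$ disagreed on enabledness of $t$, then a bounded witness would exist, contradicting the termination condition of the first phase. Part (b) analogously tells us that if some pair $M_1 \dequiv M_2$ admitted a firing of $t$ to $M_1'$ with no matching $M_2'$, a bounded witness would again exist, contradicting the termination condition of the second phase. Together these show that $\dequiv$ is a bisimulation, hence $\delta$ is stable.

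The main obstacle I foresee is the termination argument for the second phase: because the \textbf{while} loop refills $\mathcal{Q}$, one cannot simply appeal to the outer loop decreasing a single quantity. I will handle this by the joint lexicographic/potential argument above, where refinements of $\delta$ strictly decrease a well-founded measure, and only such refinements can cause re-enqueueing of places.
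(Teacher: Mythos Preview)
Your plan follows the paper's strategy: termination from well-foundedness of $>$, and stability by contradiction via Lemma~\ref{lemma:BoundedMarkings}. Your termination argument is in fact more careful than the paper's one-line appeal to well-foundedness; the potential function for the second phase correctly accounts for re-enqueueing of places into $\mathcal{Q}$.

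There is, however, a gap in your correctness argument where you write ``places $p \notin \preset{t}$ are unaffected by firing $t$.'' Two issues arise. First, the set you intend is $P \setminus \postset{t} = \{p : t \notin \preset{p}\}$, not $P \setminus \preset{t}$; the inner loop ranges over $t \in \preset{p}$, so precisely the output places $p \in \postset{t}$ are checked. Second, even with the notation corrected, an input-only place $p \in \preset{t} \setminus \postset{t}$ \emph{is} modified by firing $t$ (tokens are consumed), so unchecked places are not generally unchanged. There is also a quantifier-order issue you do not address: the algorithm's termination condition guarantees only that for each output place $p$ \emph{individually} some matching $M_2'$ exists, whereas bisimulation requires a \emph{single} $M_2'$ matching $M_1'$ on all places simultaneously; from ``for every $M_2'$ some $p$ fails'' one cannot directly extract a fixed $p$ failing for all $M_2'$, which is what would contradict the condition on line~15. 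The paper's own proof is silent on both points---it simply asserts that the non-stability witness ``is exactly the property stated in the if statement''---so your plan is not weaker than the paper's here, but the step you attempt to justify is not as immediate as you claim.
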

\begin{proof}
We first prove that $\stabilize (\mathcal{N})$ terminates. Notice that each iteration produces a new $\delta$ according to the $>$ operator, and since the operator is well-founded we know that the algorithm terminates.
We then show that for $\delta = \stabilize (\mathcal{N})$, $\delta$ is a stable partition of $\mathcal{N}$. Recall, a partition $\delta$ is stable iff for any markings $M_1 \dequiv M_2$ whenever $M_1 \xrightarrow{t} M_1'$ for some $t$ and $M_1'$ then $M_2 \xrightarrow{t} M_2'$ for some $M_2'$ such that $M_1' \dequiv M_2'$.
We prove this by contradiction. Assume that $\delta$ is not a stable partition. As such there must exists markings $M_1,M_2 \in \mathbb{M}(\mathcal{N})$ such that $M_1 \dequiv M_2$ and there exists a marking $M_1' \in \mathbb{M}(\mathcal{N})$ such that $M_1 \xrightarrow{t} M_1'$ for some transition $t$ where for all $M_2' \in \mathbb{M}(\mathcal{N})$ such that $M_2 \xrightarrow{t} M_2'$ then $M_1' \centernot{\dequiv} M_2'$.
This is exactly the property stated in the if statement on line 17 and we know from Lemma \ref{lemma:BoundedMarkings} that if the property is satisfied with two markings from $\mathbb{M}(\mathcal{N})$ then there exists two markings from $\mathbb{M}^{bounded}(\mathcal{N})$ that also satisfy the property.
\end{proof}

\subsection{Stable partition Algorithm for integer CPNs}

The $\stabilize$ computation presented in Algorithm \ref{Algo:StablePartition} can be used to
find a stable partition for any finite CPN. However, implementation-wise it is inefficient to
represent every color in a given equivalence class individually.
Hence, for integer CPN we represent an equivalence class as a tuple of ranges.
As an example of computing stable partitions with Algorithm~\ref{Algo:StablePartition},
consider the integer CPN in Figure~\ref{fig:GuardPartition}. Table~\ref{tab:exampleTable} shows
the different stages that $\delta$ undergoes in order to become stable.
In iteration 0, the guard restrictions from the first for-loop are applied,
followed by the iterations of the main while-loop.
In our implementation, we do not iterate through every bounded marking and we instead
(for efficiency reasons)
statically analyze the places, arcs and guards in order to partition the color sets.
For example, in iteration number 1, we consider the place $p_3$ and we can see that the colors in the range $[1,3]$ must
be distinguished from the color $4$. This partitioning propagates back to the place $p_1$ as firing the transition $t_1$ moves
tokens from $p_1$ to $p_3$ without changing its color.

\begin{figure}[!t]
    \centering
    \resizebox{0.75\textwidth}{!}{%
     \tikzset{every picture/.style={line width=0.75pt}} 

\begin{tikzpicture}[x=0.75pt,y=0.75pt,yscale=-1,xscale=1]

\draw   (98.5,79.25) .. controls (98.5,70.28) and (105.78,63) .. (114.75,63) .. controls (123.72,63) and (131,70.28) .. (131,79.25) .. controls (131,88.22) and (123.72,95.5) .. (114.75,95.5) .. controls (105.78,95.5) and (98.5,88.22) .. (98.5,79.25) -- cycle ;
\draw   (327.5,78.25) .. controls (327.5,69.28) and (334.78,62) .. (343.75,62) .. controls (352.72,62) and (360,69.28) .. (360,78.25) .. controls (360,87.22) and (352.72,94.5) .. (343.75,94.5) .. controls (334.78,94.5) and (327.5,87.22) .. (327.5,78.25) -- cycle ;
\draw  [fill={rgb, 255:red, 0; green, 0; blue, 0 }  ,fill opacity=1 ] (226,116) -- (233,116) -- (233,147.5) -- (226,147.5) -- cycle ;
\draw    (130,85.5) -- (223.24,125.32) ;
\draw [shift={(226,126.5)}, rotate = 203.13] [fill={rgb, 255:red, 0; green, 0; blue, 0 }  ][line width=0.08]  [draw opacity=0] (8.93,-4.29) -- (0,0) -- (8.93,4.29) -- cycle    ;
\draw   (99.5,176.25) .. controls (99.5,167.28) and (106.78,160) .. (115.75,160) .. controls (124.72,160) and (132,167.28) .. (132,176.25) .. controls (132,185.22) and (124.72,192.5) .. (115.75,192.5) .. controls (106.78,192.5) and (99.5,185.22) .. (99.5,176.25) -- cycle ;
\draw    (130,169.5) -- (222.17,137.48) ;
\draw [shift={(225,136.5)}, rotate = 520.8399999999999] [fill={rgb, 255:red, 0; green, 0; blue, 0 }  ][line width=0.08]  [draw opacity=0] (8.93,-4.29) -- (0,0) -- (8.93,4.29) -- cycle    ;
\draw    (234,132.5) -- (326.29,88.78) ;
\draw [shift={(329,87.5)}, rotate = 514.65] [fill={rgb, 255:red, 0; green, 0; blue, 0 }  ][line width=0.08]  [draw opacity=0] (8.93,-4.29) -- (0,0) -- (8.93,4.29) -- cycle    ;
\draw   (327.5,178.25) .. controls (327.5,169.28) and (334.78,162) .. (343.75,162) .. controls (352.72,162) and (360,169.28) .. (360,178.25) .. controls (360,187.22) and (352.72,194.5) .. (343.75,194.5) .. controls (334.78,194.5) and (327.5,187.22) .. (327.5,178.25) -- cycle ;
\draw    (234,132.5) -- (326.24,171.34) ;
\draw [shift={(329,172.5)}, rotate = 202.82999999999998] [fill={rgb, 255:red, 0; green, 0; blue, 0 }  ][line width=0.08]  [draw opacity=0] (8.93,-4.29) -- (0,0) -- (8.93,4.29) -- cycle    ;
\draw  [fill={rgb, 255:red, 0; green, 0; blue, 0 }  ,fill opacity=1 ] (226,206) -- (233,206) -- (233,237.5) -- (226,237.5) -- cycle ;
\draw    (132.81,187.56) -- (225,222.5) ;
\draw [shift={(130,186.5)}, rotate = 20.75] [fill={rgb, 255:red, 0; green, 0; blue, 0 }  ][line width=0.08]  [draw opacity=0] (8.93,-4.29) -- (0,0) -- (8.93,4.29) -- cycle    ;
\draw    (236.82,221.49) -- (273.01,208.54) -- (329,188.5) ;
\draw [shift={(234,222.5)}, rotate = 340.31] [fill={rgb, 255:red, 0; green, 0; blue, 0 }  ][line width=0.08]  [draw opacity=0] (8.93,-4.29) -- (0,0) -- (8.93,4.29) -- cycle    ;
\draw  [fill={rgb, 255:red, 0; green, 0; blue, 0 }  ,fill opacity=1 ] (445,63) -- (452,63) -- (452,94.5) -- (445,94.5) -- cycle ;
\draw    (360,78.25) -- (440.67,78.01) ;
\draw [shift={(443.67,78)}, rotate = 539.8299999999999] [fill={rgb, 255:red, 0; green, 0; blue, 0 }  ][line width=0.08]  [draw opacity=0] (8.93,-4.29) -- (0,0) -- (8.93,4.29) -- cycle    ;
\draw  [fill={rgb, 255:red, 0; green, 0; blue, 0 }  ,fill opacity=1 ] (445,163) -- (452,163) -- (452,194.5) -- (445,194.5) -- cycle ;
\draw    (360,178.25) -- (440.67,178.01) ;
\draw [shift={(443.67,178)}, rotate = 539.8299999999999] [fill={rgb, 255:red, 0; green, 0; blue, 0 }  ][line width=0.08]  [draw opacity=0] (8.93,-4.29) -- (0,0) -- (8.93,4.29) -- cycle    ;

\draw (106,100.4) node [anchor=north west][inner sep=0.75pt]    {$p_{1}$};
\draw (222,94.4) node [anchor=north west][inner sep=0.75pt]    {$t_{1}$};
\draw (222,184.4) node [anchor=north west][inner sep=0.75pt]    {$t_{2}$};
\draw (150.67,77.73) node [anchor=north west][inner sep=0.75pt]    {$1'( x)$};
\draw (150.33,131.73) node [anchor=north west][inner sep=0.75pt]    {$1'( y)$};
\draw (264.33,129.73) node [anchor=north west][inner sep=0.75pt]    {$1'( x,y)$};
\draw (257.67,182.73) node [anchor=north west][inner sep=0.75pt]    {$1'( x,y)$};
\draw (160.67,182.73) node [anchor=north west][inner sep=0.75pt]    {$1'( y+1)$};
\draw (441,42.4) node [anchor=north west][inner sep=0.75pt]    {$t_{3}$};
\draw (264.67,83.73) node [anchor=north west][inner sep=0.75pt]    {$2'( x)$};
\draw (381.67,57.73) node [anchor=north west][inner sep=0.75pt]    {$1'( x)$};
\draw (427.67,99.4) node [anchor=north west][inner sep=0.75pt]    {$x\ \leq \ 3$};
\draw (209.33,244.73) node [anchor=north west][inner sep=0.75pt]    {$y\ \geq \ 3$};
\draw (441,142.4) node [anchor=north west][inner sep=0.75pt]    {$t_{4}$};
\draw (375.67,155.73) node [anchor=north west][inner sep=0.75pt]    {$1'( x,y)$};
\draw (428.67,199.4) node [anchor=north west][inner sep=0.75pt]    {$y\ < \ 2$};
\draw (479,90) node [anchor=north west][inner sep=0.75pt]   [align=left] {\textbf{Declarations:}\\Color set $\displaystyle A=([ 1,4])$\\Color set $\displaystyle AA=A\ \times A$\\variable $\displaystyle x:A$\\variable $\displaystyle y:A$};
\draw (66,72) node [anchor=north west][inner sep=0.75pt]   [align=left] {[\textit{A}]};
\draw (68,172) node [anchor=north west][inner sep=0.75pt]   [align=left] {[\textit{A}]};
\draw (331,37) node [anchor=north west][inner sep=0.75pt]   [align=left] {[\textit{A}]};
\draw (328,138) node [anchor=north west][inner sep=0.75pt]   [align=left] {[\textit{AA}]};
\draw (66,95.4) node [anchor=north west][inner sep=0.75pt]    {$1'( 1)$};
\draw (69,195.4) node [anchor=north west][inner sep=0.75pt]    {$1'( 3)$};
\draw (106,196.4) node [anchor=north west][inner sep=0.75pt]    {$p_{2}$};
\draw (336,99.4) node [anchor=north west][inner sep=0.75pt]    {$p_{3}$};
\draw (336,200.4) node [anchor=north west][inner sep=0.75pt]    {$p_{4}$};

\end{tikzpicture}
    }%
    \vspace{-3mm}
    \caption{Example CPN}
    \label{fig:GuardPartition}
    \vspace*{\floatsep}
     \captionof{table}{Stages of $\delta$ throughout Algorithm~\ref{Algo:StablePartition} for CPN in Figure
   \ref{fig:GuardPartition}. The 0'th iteration is the state of $\delta$ just before the while loop begins.
   The symbol '-' indicates that the value is the same as in the previous row.}
     \label{tab:exampleTable}
    \resizebox{\textwidth}{!}{
     
\begin{tabular}{l|c|c|c|c|l}
 Iteration & \multicolumn{1}{c|}{$p_1$} & \multicolumn{1}{c|}{$p_2$} & \multicolumn{1}{c|}{$p_3$} & \multicolumn{1}{c|}{$p_4$}                                                                                               & \multicolumn{1}{c}{$\mathcal{Q}$} \\
\hline
0         & $\{([1,4])\}$            & $\{([1,4])\}$            & $\{([1,3]),([4])\}$    &
    \begin{tabular}[c]{@{}l@{}}$\{([1,4],[1]), ([1,4],[2]), ([1,4],[3,4])\}$\end{tabular} & $\{p_1,p_2,p_3,p_4\}$             \\ \hline
1, $p=p_3$         & $\{([1,3]),([4])\}$            & -            & -    & - & $\{p_1,p_2,p_4\}$             \\ \hline
2, $p=p_4$         & -            & \specialcelltwo{$\{([1]),$ $([2]), ([3,4])\}$}            & -    & - & $\{p_1,p_2\}$             \\ \hline
3, $p=p_2$         & -            & -            & -    & \begin{tabular}[c]{@{}l@{}}$\{([1,4],[1]), ([1,4],[2]),$ \\ $([1,4],[3]), ([1,4],[4])\}$\end{tabular} & $\{p_1,p_4\}$             \\ \hline
4,   $p=p_4$      & -            & \specialcelltwo{$\{([1]),$ $([2]), ([3]), ([4])\}$}            & -    & - & $\{p_1,p_2\}$             \\ \hline
5, $p=p_2$      & -            & -            & -    & - & $\{p_1\}$             \\ \hline
6, $p=p_1$      & -            & -            & -    & - & $\{\}$             \\ 
\hline
\end{tabular}

     }\vspace*{7mm}
       \captionof{table}{Stages of $\alpha$ when computing the fixed point of $E$ for the CPN in
Figure~\ref{fig:GuardPartition}. The symbol '-' indicates that the value is the same as in the previous row.}
       \resizebox{0.75\textwidth}{!}{%
    
\begin{tabular}{l|c|c|c|c}
Iteration & \multicolumn{1}{c|}{$p_1$} & \multicolumn{1}{c|}{$p_2$} & \multicolumn{1}{c|}{$p_3$} & \multicolumn{1}{c}{$p_4$}                                                                                                \\ \hline
0, $\alpha = \alpha_0$         & $\{([1])\}$            & $\{([3])\}$            & $\{\}$    & \begin{tabular}[c]{@{}l@{}}$\{\}$\end{tabular}            \\ \hline
1, $t=t_1$ 
         & -            & -            & $\{([1])\}$   & $\{([1],[3])\}$         \\ \hline
2, $t=t_2$ 
& -            & $\{([3,4])\}$            & -    & -        \\ \hline
3, $t=t_1$ 
& -            & -            & -    & $\{([1],[3,4])\}$           \\ \hline
4,   $t=t_2$ 
& -            & $\{([3,4]),([1])\}$            & -    & -            \\ \hline
5, $t=t_1$ 
     & -            & -            & -    & $\{([1],[3,4]),([1],[1])\}$            \\ \hline
\end{tabular}


    }
    \label{tab:cfpTable}\vspace*{-5mm}
\end{figure}

\section{Color approximation}\label{Sec:ColorApprox}
We now introduce another technique for safely overapproximating what colors can be
present in each place of a CPN. Let
$\mathcal{N} = \CPN{}$ be a fixed CPN for the rest of this section.
A \emph{color approximation} is a function $\alpha : P \xrightarrow{} 2^{\mathbb{C}}$
where $\alpha (p)$ approximates the possible colors in place $p \in P$ such that $\alpha (p) \subseteq C(p)$. Let $\mathbb{A}$ be the set of all color approximations.
For a marking $M$ and color approximation $\alpha$, we write $M \subseteq \alpha$ iff $\colorsimp(M(p)) \subseteq \alpha(p)$ for all $p \in P$.
A \emph{color expansion} is a function $E : \mathbb{A} \xrightarrow{} \mathbb{A}$ defined by
\begin{align*}
    E(\alpha)(p) = \left \{\begin{array}{lll}
         &\alpha(p) \cup \colorsimp(W((t,p),b)) \textit{ }&\text{if } \exists t \in T.\exists b \in B(t).
         \\ &&\colorsimp(W((q,t),b)) \subseteq \alpha(q) \text{ for all } q \in P
         \\
         &\alpha (p) \textit{ }& \text{otherwise.}
    \end{array}\right.
\end{align*}

A color expansion iteratively expands the possible colors that exist in each place but without keeping
a count of how many tokens of each color are present (if a color is present in a place, we assume
that arbitrary many copies of the color are present).
The expansion function obviously preserves the following property.

\begin{lemma}\label{Lemma:Monotonicity} 
Let $\alpha$ be a color approximation then $\alpha(p) \subseteq E(\alpha)(p)$ for all $p \in P$.
\end{lemma}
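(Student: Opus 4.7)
The plan is straightforward since this lemma is essentially a direct unpacking of the case-split in the definition of $E$. I would fix an arbitrary place $p \in P$ and perform a case analysis on which branch of the piecewise definition of $E(\alpha)(p)$ is triggered.

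In the first case, there exists some $t \in T$ and $b \in B(t)$ such that $\colorsimp(W((q,t),b)) \subseteq \alpha(q)$ for all $q \in P$, and then by definition $E(\alpha)(p) = \alpha(p) \cup \colorsimp(W((t,p),b))$. The inclusion $\alpha(p) \subseteq \alpha(p) \cup \colorsimp(W((t,p),b))$ holds trivially by the definition of set union. In the second case, no such $t$ and $b$ exist, and then $E(\alpha)(p) = \alpha(p)$, so the inclusion $\alpha(p) \subseteq E(\alpha)(p)$ is just set equality.

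Since $p \in P$ was arbitrary, we conclude $\alpha(p) \subseteq E(\alpha)(p)$ for all $p \in P$. There is no real obstacle here: the lemma is purely about the shape of the definition of $E$, which always returns a superset of $\alpha(p)$ at each place, because in one branch we take a union with $\alpha(p)$ on the left, and in the other we return $\alpha(p)$ unchanged. The only thing to be careful about is making the case distinction explicit and matching it exactly to the two clauses of the definition of $E$.
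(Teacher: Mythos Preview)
Your proposal is correct and is exactly the intended argument: the paper itself does not spell out a proof but simply remarks that the expansion function ``obviously preserves'' this property, which is precisely the case split on the definition of $E$ that you give.
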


Let $\alpha_0$ be the initial approximation such that $\alpha_0(p) \defeq \colorsimp(M_0(p))$ for all $p \in P$.
Since $E$ is a monotonic function on a complete lattice, we can compute its
minimum fixed point and formulate the following key lemma. 

\begin{lemma} \label{alphatheorem} 
Let $\alpha$ be a minimum fixed point of $E$ such that $\alpha_0(p) \subseteq \alpha(p)$ for all
$p\in P$.
If $M_0 \xrightarrow{}^* M$ then $M \subseteq \alpha$.
\end{lemma}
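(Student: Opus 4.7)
The plan is to proceed by induction on the length of the reduction $M_0 \xrightarrow{}^{*} M$, using the fact that $\alpha$ is a fixed point of $E$ (together with $\alpha_0 \subseteq \alpha$) in the inductive step.

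For the base case, $M = M_0$, so for every $p \in P$ we have $\colorsimp(M(p)) = \alpha_0(p) \subseteq \alpha(p)$ by the assumption of the lemma, hence $M_0 \subseteq \alpha$.

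For the inductive step, assume $M_0 \xrightarrow{}^{*} M' \xrightarrow{t} M$ where $t \in T$ is fired under a binding $b \in B(t)$, and assume by the induction hypothesis that $M' \subseteq \alpha$, i.e.\ $\colorsimp(M'(q)) \subseteq \alpha(q)$ for all $q \in P$. Fix an arbitrary $p \in P$. From the CPN semantics we have $M(p) = (M'(p) \setminus W((p,t),b)) \uplus W((t,p),b)$, so by monotonicity of $\colorsimp$ under multiset union and subtraction,
\begin{equation*}
\colorsimp(M(p)) \subseteq \colorsimp(M'(p)) \cup \colorsimp(W((t,p),b)).
\end{equation*}
The first term is contained in $\alpha(p)$ by the induction hypothesis, so it remains to show $\colorsimp(W((t,p),b)) \subseteq \alpha(p)$.

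The key step is to invoke the definition of $E$ at $\alpha$. Since $t$ is enabled at $M'$ under $b$, we have $W((q,t),b) \subseteq M'(q)$ for every $q \in P$, which gives $\colorsimp(W((q,t),b)) \subseteq \colorsimp(M'(q)) \subseteq \alpha(q)$ for all $q \in P$. Hence the ``$\exists t.\exists b$'' premise in the definition of $E(\alpha)(p)$ is witnessed by our $t$ and $b$, so $\colorsimp(W((t,p),b)) \subseteq E(\alpha)(p)$. Because $\alpha$ is a fixed point of $E$, we have $E(\alpha)(p) = \alpha(p)$, whence $\colorsimp(W((t,p),b)) \subseteq \alpha(p)$. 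Combining the two inclusions yields $\colorsimp(M(p)) \subseteq \alpha(p)$, and since $p$ was arbitrary, $M \subseteq \alpha$.

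The proof is essentially routine; the only subtlety is matching the existential premise in the case split of $E$ with the enabledness condition of $t$ at $M'$, and then using the fixed-point equation $E(\alpha) = \alpha$ to transfer the produced colors back into $\alpha$. Note that we do not need the minimality of the fixed point for this lemma --- any fixed point of $E$ containing $\alpha_0$ pointwise overapproximates the reachable markings --- so minimality is only relevant for obtaining the tightest approximation in practice.
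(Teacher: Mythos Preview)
Your proof is correct and follows essentially the same approach as the paper: induction on the length of the firing sequence, using the induction hypothesis together with enabledness to verify the existential premise in the definition of $E$ and then the fixed-point equation to absorb the produced colors. Your version is in fact more explicit than the paper's in spelling out why the premise $\colorsimp(W((q,t),b)) \subseteq \alpha(q)$ holds, and your closing remark that minimality of the fixed point is not needed for the lemma is a correct and worthwhile observation.
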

\begin{proof}
By induction on $k$ we prove if $M_0 \xrightarrow{}^k M$ then $M \subseteq \alpha$.
\textit{Base step.} If $k=0$, we know that $M_0 \subseteq \alpha$
by the assumption of the lemma.
\textit{Induction step.} Let $M_0 \xrightarrow{}^k M \xrightarrow{t} M'$ by some transition $t$ with some binding $b \in B(t)$. We want to show that $M' \subseteq \alpha$. By induction hypothesis we know that $M \subseteq \alpha$. If $M \xrightarrow{t} M'$ for some $b \in B(t)$, then $M'(p) = (M(p) \setminus W((p,t),b)) \uplus W((t,p),b)$ for all $p \in P$. Since $E(\alpha)$ is a fixed point then $\alpha(p) = \alpha(p) \cup \colorsimp(W((t,p),b))$ for transition $t$ under binding $b$ for all $p \in P$ i.e. $\colorsimp(W((q,t),b)) \subseteq \alpha(q)$ for all $q \in P$. Thus we get $M' \subseteq \alpha$. 
\end{proof}

Given a color approximation $\alpha$ satisfying the preconditions of Lemma~\ref{alphatheorem},
we can now construct a reduced CPN
$\mathcal{N}^\alpha = (P, T, \mathbb{C}, \mathbb{B}, C^\alpha, G, W, W_I, M_0 )$ where $C^\alpha(p) = \alpha(p)$ for all $p \in P$.
The net $\mathcal{N}^\alpha$ can hence have possibly smaller set of colors in its color domains
and it satisfies the following theorem.

\begin{theorem}
The reachable fragments from the initial marking $M_0$ of the LTSs
generated by $\mathcal{N}$ and $\mathcal{N}^\alpha$ are isomorphic.
\end{theorem}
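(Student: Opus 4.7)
The plan is to show that the identity map on markings constitutes the desired isomorphism between the reachable fragments of the LTSs of $\mathcal{N}$ and $\mathcal{N}^\alpha$. The key observation is that $\mathcal{N}$ and $\mathcal{N}^\alpha$ agree on every component of the tuple except for the place color type function, and the transition-firing semantics depends only on $T$, $\mathbb{B}$, $G$, $W$, $W_I$, all of which are shared. Consequently, whenever a marking is valid in both nets, the $\xrightarrow{t}$ relation produces the same successors.

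First, I would verify that the set of reachable markings coincides. For the $\mathcal{N} \Rightarrow \mathcal{N}^\alpha$ direction, Lemma~\ref{alphatheorem} gives that if $M_0 \xrightarrow{}^* M$ in $\mathcal{N}$ then $M \subseteq \alpha$, i.e. $\colorsimp(M(p)) \subseteq \alpha(p) = C^\alpha(p)$ for every $p \in P$, so $M$ is a valid marking of $\mathcal{N}^\alpha$. For the opposite direction, since $C^\alpha(p) \subseteq C(p)$, every marking of $\mathcal{N}^\alpha$ is a priori a marking of $\mathcal{N}$. I would then prove by induction on the firing-sequence length $k$ that $M_0 \xrightarrow{}^k M$ holds in $\mathcal{N}$ iff it holds in $\mathcal{N}^\alpha$: the base case is trivial, and in the induction step, given $M \xrightarrow{t} M'$ under binding $b$, the enabledness conditions $W((p,t),b) \subseteq M(p)$, $W_I(p,t) > |M(p)|$, and $G(t,b) = \mathit{true}$ are evaluated identically in both nets, the successor marking $M'$ is computed by the same formula, and $M'$ remains a valid marking of $\mathcal{N}^\alpha$ by Lemma~\ref{alphatheorem} applied to the extended path.

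Together these facts imply that the identity map, restricted to the reachable markings, is a bijection between the state spaces of the two reachable LTS fragments that preserves and reflects every labeled transition, which is exactly the definition of isomorphism (with the natural labeling $\ell(t) = t$).

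The main subtlety, rather than a deep obstacle, concerns the question of whether $\mathcal{N}^\alpha$ strictly satisfies the condition $\colorsimp(W((p,t),b)) \subseteq C^\alpha(p)$ imposed on the arc evaluation function in the definition of a CPN: arc expressions may in principle mention colors outside $\alpha(p)$ under bindings that are never enabled from any reachable marking. This does not affect the argument, because such bindings fail the enabledness test $W((p,t),b) \subseteq M(p)$ for every reachable $M$ (since $\colorsimp(M(p)) \subseteq \alpha(p)$), and therefore contribute no transitions to the reachable fragment. If desired, this can be cleaned up beforehand by restricting $\mathbb{B}$ to the bindings whose arc outputs respect $\alpha$, without changing the reachable LTS.
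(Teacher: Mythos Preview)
Your proposal is correct and follows essentially the same approach as the paper: invoke Lemma~\ref{alphatheorem} to confine every reachable marking of $\mathcal{N}$ to $\alpha$, then observe that the firing semantics is independent of the color-type function $C$, so the identity map is an isomorphism on the reachable fragments. The paper's own proof is a two-sentence sketch of exactly this idea; your version simply spells out the induction and additionally notes the technical wrinkle about $W$ possibly producing colors outside $C^\alpha$, which the paper leaves implicit.
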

\begin{proof}
By Lemma~\ref{alphatheorem} we know that
$M \subseteq \alpha$ for any marking $M \in \mathbb{M}(\mathcal{N})$ reachable from $M_0$.
Since the reachable fragments of $\mathcal{N}$ and $\mathcal{N}^\alpha$ are exactly the reachable markings, we know that the reachable fragments are isomorphic. 
\end{proof}

\subsection{Computing color approximation on integer CPNs}

As with color quotienting, representing each color individually becomes inefficient.
We thus employ integer ranges to represent color approximations.
Consider the approximation $\alpha$ where $\alpha(p) = \{(1,2),(2,2),(3,2),(5,6),(5,7)\}$
are possible colors (pairs of integers) in the place $p$; this
can be more compactly represented as a set of tuples of ranges $\{([1,3],[2]),([5],[6,7])\}$.

However, computing the minimum fixed point of $E$ using ranges is not as trivial as using complete
color sets. To do so, we need to compute new ranges depending on arcs and guards.
We demonstrate this on the CPN in Figure \ref{fig:GuardPartition}.
Table~\ref{tab:cfpTable} shows the computation of the minimum fixed point of $E$, starting
from the initial approximation $\alpha_0$. 
For example, in iteration number 5, we check if firing transition $t_1$ can produce
any additional tokens to the places $p_3$ and $p_4$. Clearly, there is no change to the
possible token colors in $p_3$ as $\alpha(p_1)$ did not change, however the addition
of the integer range $[1]$ to $\alpha(p_2)$ in the previous iteration now allows us
to produce a new token color $(1,1)$ into $p_4$ and hence we add the singleton range $([1],[1])$
to $\alpha(p_4)$. Due to the guard $y \geq 3$ on the transition $t_2$, we know that
the added token $(1,1)$ does not generate any further behaviour and hence we reached a fixed point.

\section{Experiments}
We implemented the quotienting method from Section~\ref{Sec:Partitioning}
as well as the color approximation method from Section \ref{Sec:ColorApprox} in C++ as
an extension to the verification engine \textit{verifypn}~\cite{verifypnPaper}
from the TAPAAL toolchain~\cite{TAPAALTool}.

\medskip
We perform experiments by comparing several different approaches; the quotienting approach (method A),
the color approximation approach (method B) and the combination of
both (method A+B) against
\begin{itemize}
\itemsep=0.95pt
\item the unfolder MCC~\cite{MCCUnfolder} (used also by TINA~\cite{TINAtool} and LoLA~\cite{LoLA2}),
\item ITS-Tools unfolder~\cite{ITSToolPaper},
\item the Spike unfolder~\cite{SpikeTool} (also used by
MARCIE~\cite{MARCIETool} and Snoopy~\cite{SnoopyTool}),
\item and \textit{verifypn} TAPAAL unfolder with methods A and B disabled, referred to as Tapaal.
\end{itemize}
We compare the tools on the complete set of CPN nets and queries from
2021 Model Checking Contest~\cite{mcc:2021}\footnote{We omit the \emph{UtilityControlRoom} family of models as
they contain operators not supported by ITS-Tools.}.
The experiments are conducted on a compute cluster running Linux version 5.4.0, where each experiment is conducted on a AMD Epyc 7642 processor with a 15 GB memory limit and 5 minute timeout.
To reduce noise in the experiments, we read and write models to a RAM-disk (not included in the 15 GB).
A repeatability package is available in~\cite{Repeatabilitypackage}.

We conduct two series of experiments:
in Section~\ref{sec:unf}
we study the unfolding performance of the tools; i.e. how fast and how many models can be unfolded
by each tool and how large are the unfolded models, and
in Section~\ref{sec:query} we study the impact of this unfolding on the ability to answer the
queries from the Model Checking Contest 2021.

\subsection{Unfolding experiments} \label{sec:unf}

Table~\ref{tab:unfoldedNets} shows for each of the unfolders
the number of unfolded nets within the memory/time limit.
The last column shows the total number of unfolded nets by all tools combined, and we notice that the combination of methods A$+$B allows us to unfold a superset of models unfolded by the other tools.
Our method A$+$B can unfold 3 nets that
no other tool can unfold; DrinkVendingMachine48, 76, 98. This is directly attributed to method A.

\begin{table}[t]
\centering

\caption{Number of unfolded nets for each unfolder}
\label{tab:unfoldedNets}
\newcolumntype{x}[1]{>{\centering\arraybackslash\hspace{0pt}}p{#1}}

\begin{tabular}{l|x{0.9cm}x{1.1cm}x{0.9cm}x{0.9cm}x{0.9cm}x{0.9cm}x{0.9cm}|x{0.9cm}}
              & Spike & Tapaal & ITS-Tools & A & B  & MCC & A$+$B & Total\\ \hline
Unfolded nets & 170 & 192 & 193 & 195 & 200 & 200  & 203 & 203
\end{tabular}
\end{table}

The comparison of the sizes (total number of transitions and places)
of unfolded nets is done by plotting the ratios
between the size produced by our A+B method and the competing unfolder.
Figure~\ref{Fig:Sizeratio90best} shows the size ratios where at least one comparison is not equal to 1.
We see that our method has a size ratio always smaller or equal to 1 (no other method unfolds
any of the nets to a smaller size compared to our method A+B)
and the size ratio is strictly smaller than 1 for 144 colored nets (out of 213 nets in the database).
Moreover, we can reduce 45 nets by at least one order of magnitute, compared to all other unfolders.

\begin{figure}[t]
\begin{subfigure}[b]{0.5\textwidth}
\centering
\includegraphics[scale=0.53]{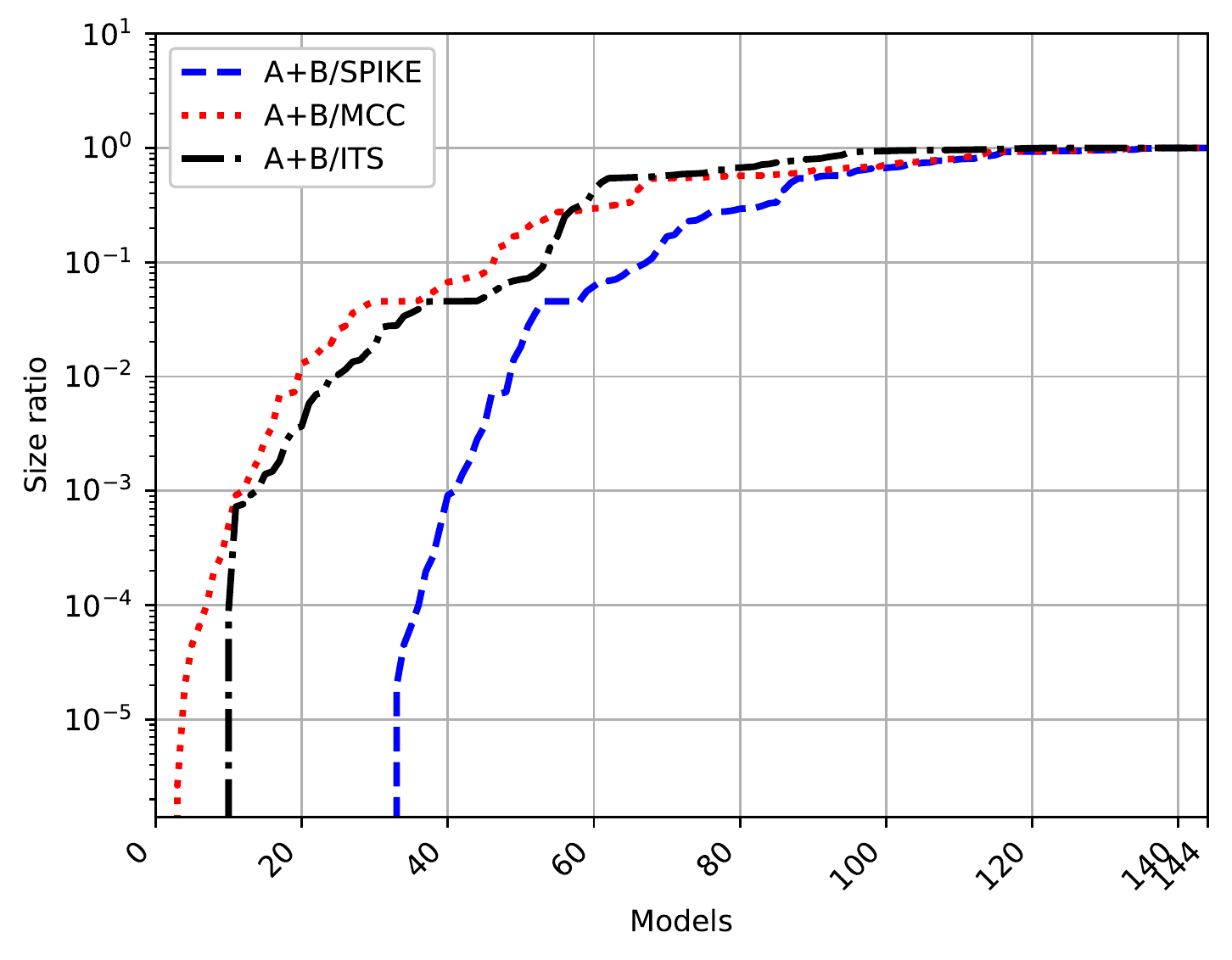}
\caption{Size-ratios (nondecreasingly ordered)}
\label{Fig:Sizeratio90best}
\end{subfigure}
\begin{subfigure}[b]{0.5\textwidth}
\centering
\includegraphics[scale=0.53]{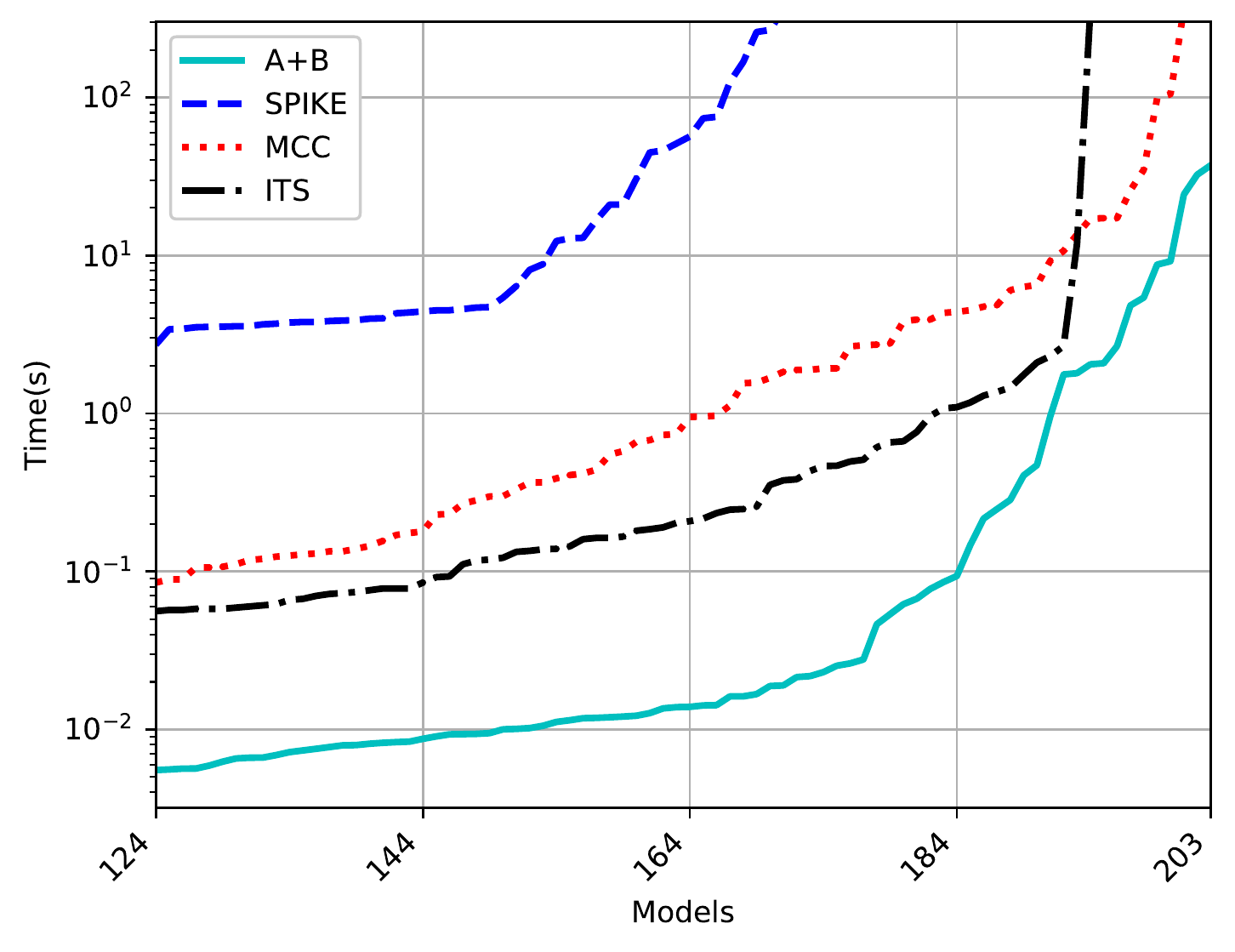}
\caption{Worst 80 unfolding times}
\label{Fig:time80best}
\end{subfigure}\vspace*{-1mm}
\caption{Unfolding size and unfolding time comparison}\vspace*{-2mm}
\end{figure}

As our method outperforms the state-of-the-art unfolders w.r.t. the size of the unfolded nets,
the question is whether the overhead of the advanced static analysis does not kill
the benefits. Fortunately, this is not the case as shown in Figure~\ref{Fig:time80best}
where the 80 slowest unfolding times (independently sorted in nondecreasing order for each tool)
are depicted. The plots show that our method has almost a magnitude reduction in running-time compared to
ITS-Tools and MCC (which are close in performance),
while Spike is significantly slower.
ITS-Tools is generally fast on the nets that are unfolded in less
than 10 seconds, however it becomes gradually slower and has problems unfolding the larger nets.
Our unfolder and the MCC unfolder demonstrate a similar degradation trend in performance.

The overall conclusion is that our advanced analyses requires less overhead compared to other existing
unfolders, while at the same time it significantly decreases the size of the unfolded nets.

\subsection{Query verification experiments} \label{sec:query}

In these experiments, we examine which unfolding engine allows for most query answers verified
on the unfolded nets. To allow for a fair comparison, we let each tool unfold and output the net to a PNML file. Regarding queries, both method A$+$B and ITS-Tools can already output the unfolded queries. For MCC, we implement our own translation from the colored queries to the unfolded queries for the given nets. For Spike,
we were not able to construct a query unfolder that worked consistently and for this
reason Spike is excluded from the query verification experiments.

\begin{table}[!b]
\vspace*{-2mm}
\centering
\caption{Number of queries answered for the unfolded nets of each tool. The \% column describes how many percent of the total available queries in each category were answered. Each category counts $213 \cdot 16=3408$ queries.}
\label{tab:queriesAnswered} \vspace*{-1mm}
\scalebox{0.9}{
\begin{tabular}{lllllll}
\hline
\multicolumn{7}{c}{\textbf{Cardinality Queries}}                                                                                                                                                     \\ \hline
\multicolumn{1}{l|}{}                               & \multicolumn{2}{c|}{A+B}                              & \multicolumn{2}{c|}{MCC}                              & \multicolumn{2}{c}{ITS-Tools}         \\ \hline
\multicolumn{1}{l|}{}                               & \multicolumn{1}{l|}{Solved} & \multicolumn{1}{c|}{\%} & \multicolumn{1}{l|}{Solved} & \multicolumn{1}{c|}{\%} & \multicolumn{1}{l|}{Solved} & \multicolumn{1}{c}{\%} \\ \hline
\multicolumn{1}{l|}{ReachabilityCardinality}        & \multicolumn{1}{l|}{2909}   & \multicolumn{1}{l|}{85.4}   & \multicolumn{1}{l|}{2793}   & \multicolumn{1}{l|}{82.0}   & \multicolumn{1}{l|}{2759}   &  81.0 \\ \hline
\multicolumn{1}{l|}{CTLCardinality}                 & \multicolumn{1}{l|}{2651}   & \multicolumn{1}{l|}{77.8}   & \multicolumn{1}{l|}{2490}   & \multicolumn{1}{l|}{73.1}   & \multicolumn{1}{l|}{2443}   &  73.8  \\ \hline
\multicolumn{1}{l|}{LTLCardinality}                 & \multicolumn{1}{l|}{2952}   & \multicolumn{1}{l|}{86.6}   & \multicolumn{1}{l|}{2818}   & \multicolumn{1}{l|}{82.7}   & \multicolumn{1}{l|}{2785}   & 81.7   \\ \hline
\multicolumn{1}{l|}{\textbf{Total}}                 & \multicolumn{1}{l|}{\textbf{8512}}       & \multicolumn{1}{l|}{\textbf{83.3}}   & \multicolumn{1}{l|}{\textbf{8101}}       & \multicolumn{1}{l|}{\textbf{79.2}}   & \multicolumn{1}{l|}{\textbf{7987}}       &    \textbf{78.1}\\ \hline
\multicolumn{7}{c}{\textbf{Fireability Queries}}                                                                                                                                                                 \\ \hline
\multicolumn{1}{l|}{ReachabilityFireability}        & \multicolumn{1}{l|}{2567}   & \multicolumn{1}{l|}{75.3}   & \multicolumn{1}{l|}{2484}   & \multicolumn{1}{l|}{72.9}   & \multicolumn{1}{l|}{2513}   &  73.7  \\ \hline
\multicolumn{1}{l|}{CTLFireability}                 & \multicolumn{1}{l|}{2047}   & \multicolumn{1}{l|}{60.1}   & \multicolumn{1}{l|}{1878}   & \multicolumn{1}{l|}{55.1}   & \multicolumn{1}{l|}{1695}   &  49.7  \\ \hline
\multicolumn{1}{l|}{LTLFireability}                 & \multicolumn{1}{l|}{2798}   & \multicolumn{1}{l|}{82.1}   & \multicolumn{1}{l|}{2639}   & \multicolumn{1}{l|}{77.4}   & \multicolumn{1}{l|}{2520}   & 73.9  \\ \hline
\multicolumn{1}{l|}{\textbf{Total}}                 & \multicolumn{1}{l|}{\textbf{7412}}       & \multicolumn{1}{l|}{\textbf{72.5}}   & \multicolumn{1}{l|}{\textbf{7001}}       & \multicolumn{1}{l|}{\textbf{68.5}}   & \multicolumn{1}{l|}{\textbf{6728}}       &  \textbf{65.8}  \\ \hline \hline
\multicolumn{1}{l|}{\textbf{Total query answers}} & \multicolumn{1}{l|}{\textbf{15924}}       & \multicolumn{1}{l|}{\textbf{77.9}}   & \multicolumn{1}{l|}{\textbf{15102}}       & \multicolumn{1}{l|}{\textbf{73.9}}   & \multicolumn{1}{l|}{\textbf{14715}}       &  \textbf{72.0}  \\ \hline
\end{tabular} }
\end{table}

Since we are testing the effect of the unfolding and not the verification engine, we use \textit{verifypn}
(revision 507c8ee0) to verify the queries on the nets unfolded by the different unfolders.
There is a total of 20,448 queries to be answered. The results can be seen in Table~\ref{tab:queriesAnswered}.

We see that using the method A$+$B to unfold nets allows us to answer more queries in every category due to the generally smaller nets it unfolds to. In total, we are able to answer $4$ percentage points more queries using the unfolded nets of method A$+$B compared to using the unfolded nets of MCC and $5.9$ percentage points more compared to ITS-Tools.

\section{Conclusion}
We presented two complementary methods 
for reducing the unfolding size of colored Petri nets (CPN). Both methods are proved
correct and implemented in an open-source verification engine of the tool TAPAAL.
Experimental results show a significant improvement in the size of unfolded
nets, compared to state-of-the-art tools, without compromising the unfolding speed.
The actual verification on the models and queries from the 2021 Model Checking Contest
shows that our unfolding technique allows us to solve 4\% more queries compared to the
second best competing tool. In future work, we plan to combine our approach with
structural reduction techniques applied directly to the colored nets. 

\paragraph{Acknowledgments.} We would like to thank Yann Thierry-Mieg
for his answers and modifications to the ITS-Tools,
Silvano Dal Zilio for his answers/additions concerning the MCC unfolder
and Monika Heiner and Christian Rohr for their answers concerning the tools Snoopie, Marcie and Spike.

\sloppy
\bibliographystyle{fundam}
\bibliography{references}
\end{document}